\def\dOi{11(3:8)2015}
\def\eg{{\em e.g.}\xspace}
\def\cf{{\em cf.}\xspace}
\begin{document}

\title[Formal Abstractions of Continuous-Space Markov Processes]{
Quantitative Approximation of the\\ Probability Distribution
of a Markov Process\\ by Formal Abstractions\rsuper*
}

\author[S.~Esmaeil Zadeh Soudjani]{Sadegh Esmaeil Zadeh Soudjani}
\address{Department of Computer Science, University of Oxford, United Kingdom}
\email{\{Sadegh.Soudjani,Alessandro.Abate\}@cs.ox.ac.uk}

\author[A.~Abate]{Alessandro Abate}
\address{\vspace{-18 pt}}
\titlecomment{{\lsuper*}This article generalises and completes the results presented
  in \cite{SA14} and specifically benefits from extensions first
  discussed in \cite{SAH12}.}
\thanks{This work has been supported by the European Commission via STREP project MoVeS 257005 and IAPP project AMBI 324432, 
and by the John Fell Oxford University Press Research Fund.}

\keywords{Continuous-Space Markov Processes, 
Discrete-time Stochastic Systems, 
Formal Abstractions, 
PCTL Verification, 
Probabilistic Invariance, 
Higher-Order Approximations}


\begin{abstract}
\label{sec:abstract}
\noindent The goal of this work is to formally abstract a Markov process evolving in discrete time over a general state space as a finite-state Markov chain, 
with the objective of precisely approximating its state probability distribution 
in time, 
which allows for its approximate, faster computation by that of the Markov chain. 
The approach is based on formal abstractions and employs an arbitrary finite partition of the state space of the Markov process, 
and the computation of average transition probabilities between partition sets.
The abstraction technique is formal, 
in that it comes with guarantees on the introduced approximation that depend on the diameters of the partitions:  
as such, they can be tuned at will.  
Further in the case of Markov processes with unbounded state spaces, 
a procedure for precisely truncating the state space within a compact set is provided, 
together with an error bound that depends on the asymptotic properties of the transition kernel of the original process. 
The overall abstraction algorithm, 
which practically hinges on piecewise constant approximations of the density functions of the Markov process, 
is extended to higher-order function approximations:  
these can lead to improved error bounds and associated lower computational requirements.     
The approach is practically tested to compute probabilistic invariance of the Markov process under study, 
and is compared to a known alternative approach from the literature. 
\end{abstract}

\maketitle

\section{Introduction}
\label{sec:intro}
Verification techniques and tools for deterministic, discrete time, finite-state systems have been available for many years \cite{Kur94}. 
Formal methods in the stochastic context are typically limited to finite-state structures, 
either in continuous or in discrete time \cite{BKH99,KNSS00}.
Stochastic processes evolving over continuous (uncountable) spaces are often related to undecidable problems (the exception being when they admit analytical solutions).
It is thus of interest to resort to formal approximation techniques that allow solving decidably corresponding problems over finite discretisations of the original models.
In order to formally relate the computable approximate solutions to the original problems,  
it is of interest to come up with explicit bounds on the error introduced by the approximations. 
The use of formal approximations techniques over complex models can be looked at from the perspective of research on abstractions, 
which are of wide use in formal verification. 

Successful numerical schemes based on Markov chain approximations of 
general stochastic systems in continuous time have been introduced in the literature \cite{KD01}.
However, the finite abstractions are only related to the original models asymptotically (at the limit, that is weakly), with no explicit error bounds. 
This approach has been applied to the approximate study of probabilistic reachability or safety of stochastic hybrid models in \cite{KR06,PH06}.
An alternative line of work on approximations of continuous-space, discrete-time Markov processes is pursued in \cite{DDP04,DGJP03}, 
where the discussed approximation scheme generates a finite-state model.
In \cite{DDP03} the idea of approximating by averaging is introduced,
where the conditional expectation is used to compute the approximation,
and is later extended in \cite{CDPP14}.
The weak point of these contributions is the fact that the approximation errors that are essential in assessing the quality of the approximation are not computed.
 
As an alternative to qualitative approximations, 
in \cite{APKL10} a technique has been introduced to provide formal abstractions of discrete-time, continuous-space Markov models \cite{APLS08}, 
with the objective of investigating their probabilistic invariance (safety) via model checking procedures over a finite Markov chain.
In view of computational scalability, the approach has been improved and optimised in \cite{SA11,SA13}, 
extended to a wider class of processes \cite{SA12,SATAC12}, and practically implemented as a software tool \cite{FAUST15}.  
These abstraction techniques hinge on piecewise-constant approximations of the kernels of the Markov process. 
Linear projection operators are employed in \cite{SAH12} to generalise these techniques via higher-order interpolations that provide improved error bounds on the approximation level.

In this work we show that the approach in \cite{APKL10,SA13} can be successfully employed to approximately compute the statistics in time of a stochastic process over a continuous state space. 
We first provide a forward recursion for the approximate computation of the state distribution in time of the Markov process. 
This computation is based on a partitioning of the state space, 
and on the abstraction of the Markov process as a finite-state Markov chain. 
Further, a higher-order approximation method is presented, as a generalisation of the approach above, 
and an upper bound on the error related to the new approximation is formally derived.
Based on the information gained from the state distribution in time, 
we show that the method can be used as an alternative to \cite{APKL10,SAH12,SA12,SA13} to approximately compute 
probabilistic invariance (safety) for discrete-time stochastic systems over general state spaces.
Probabilistic invariance (safety) is the dual problem to probabilistic reachability. 
Over deterministic models reachability and safety have been vastly studied in the literature, 
and computational algorithms and tools have been developed based on both forward and backward recursions.
Similarly, for the probabilistic models under study, 
we compare the presented approach (based on forward computations) with existing approaches in the literature \cite{APKL10,SAH12,SA12,SA13} (which hinge on backward computations), 
particularly in terms of the introduced approximation error.

The Markov chain abstraction applied to the forward/backward computation of probabilistic invariance can be generalised to other specifications expressed as non-nested PCTL formulae or to reward-based criteria characterised via value function recursions. 
Moreover, the constructed Markov chain can be shown to represent an \emph{approximate probabilistic bisimulation} of the original process \cite{A12,TA13,AKNP14}.

The article is structured as follows. 
Section~\ref{sec:prelim} introduces the model under study and discusses some structural assumptions needed for the abstraction procedure. 
The procedure comprises two separate parts:
Section~\ref{sec:trunc} describes the truncation of the dynamics of the model, 
whereas Section~\ref{sec:partition} details the abstraction of the dynamics (approximation of the transition kernel) -- 
both parts contribute to the associated approximation error. 
Section \ref{sec:approx&error} considers higher-order approximation schemes and quantifies
the introduced approximation error. Section \ref{sec:proj} specialises 
these higher-order schemes to explicit algorithms for low-dimensional models using known interpolation bases. 
Section~\ref{sec:safety} discusses the application of the procedure to the computation of probabilistic invariance, 
and compares it against an alternative approach in the literature.

\smallskip

In this article we use $\mathbb N \doteq \{1,2,3,\ldots\}$ to denote the natural numbers,
$\mathbb N_m\doteq \{1,2,\ldots,m\}$ and $\mathbb Z_m\doteq \{0,1,2,\ldots,m\}$
for any $m\in\mathbb N$.

\section{Models, Preliminaries, and Goals of this work}
\label{sec:prelim}

We consider a discrete time Markov process $\mathscr M_{\mathfrak s}$ defined over a general state space, 
which is characterised by a pair $(\mathcal S, T_{\mathfrak s})$,
where $\mathcal S$ is a continuous state space that we assume endowed with a metric and be separable\footnote{A metric space $\mathcal S$ is called separable if it has a countable dense subset.}.
We denote by $(\mathcal S,\mathcal B(\mathcal S),\mathcal P)$ the probability structure on $\mathcal S$,
with $\mathcal B(\mathcal S)$ being the Borel $\sigma$-algebra\footnote{The Borel $\sigma$-algebra $\mathcal B(\mathcal S)$ is the smallest $\sigma$-algebra in $\mathcal S$ that contains all open subsets of $\mathcal S$. For a separable metric space $\mathcal S$, $\mathcal B(\mathcal S)$ equals the $\sigma$-algebra generated by the open (or closed) balls of $\mathcal S$.} 
in $\mathcal S$ and $\mathcal P$ a probability measure to be characterised shortly.  
$T_{\mathfrak s}$ is a stochastic kernel that assigns to each point $s\in\mathcal S$ a probability measure $T_{\mathfrak s}(\cdot|s)$, 
so that for any measurable set $A\in\mathcal B(\mathcal S)$, 
$\mathcal P(s(1) \in A|s(0) = s) = T_{\mathfrak s}(A|s)$. 
We assume that the stochastic kernel $T_{\mathfrak s}$ admits a density function $t_{\mathfrak s}$,
namely $T_{\mathfrak s}(A|s) = \int_A t_{\mathfrak s}(\bar s|s)d\bar s$.  

Given the measurable space $\left(\mathcal S,\mathcal B(\mathcal S), \mathcal P\right)$, we set up the product space $\mathcal S^{t+1}$
containing elements $\mathbf s(t) = \left[s(0),s(1),\ldots,s(t)\right]$, where the bold typeset is used in the sequel to indicate vector quantities.
Suppose that the initial state of the Markov process $\mathscr M_{\mathfrak s}$ is distributed according to the density function $\pi_0:\mathcal S\rightarrow \mathbb R^{\ge 0}$. Then the multi-variate density function $\pi_0(s_0)t_{\mathfrak s}(s_1|s_0)t_{\mathfrak s}(s_2|s_1)\ldots t_{\mathfrak s}(s_t|s_{t-1})$
is a probability measure $\mathbb P$ on the product space $\mathcal S^{t+1}$.
On the other hand the state distribution of $\mathscr M_{\mathfrak s}$ at time $t \in \mathbb N$ is characterised by a density function $\pi_t:\mathcal S\rightarrow \mathbb R^{\ge 0}$,
which fully describes the statistics of the process at time $t$ and is in particular such that, 
for all $A \in \mathcal B(\mathcal S)$, 
\begin{equation*}
\mathbb P(s(t)\in A) = \int_{A} \pi_t(s)ds,
\end{equation*}
where the symbol $\mathbb P$ is used to indicate the probability associated to events over the product space $\mathcal S^{t+1}$
(note that the event $s(t)\in A$ is equivalent to $\mathbf s(t)\in\mathcal S^t\times A$ on their corresponding probability spaces).

The state density functions $\pi_{t}(\cdot)$ can be characterised recursively, as follows: 
\begin{equation}
\label{eq:recurs}
\pi_{t+1}(\bar s) = \int_{\mathcal S}t_{\mathfrak s}(\bar s|s)\pi_t(s)ds \quad \forall\bar s \in \mathcal S.
\end{equation}
In practice the forward recursion in \eqref{eq:recurs} rarely yields a closed form for $\pi_t(\cdot)$.  
A special instance where this is the case is represented by a linear dynamical system perturbed by a Gaussian process noise: 
due to the closure property of Gaussian distributions over addition and multiplication by a constant,  
it is possible to explicitly write recursive formulae for the mean and the variance of the distribution, 
and thus express in a closed form the distribution in time of the solution process.  
In more general cases, it is necessary to numerically (hence, approximately) compute this density function in time.    

\medskip

This article provides a numerical approximation of the density function of $\mathscr M_{\mathfrak s}$ in time 
as the probability mass function (pmf) of a finite-state Markov chain $\mathscr M_{\mathfrak p}$.  
The Markov chain $\mathscr M_{\mathfrak p}$ is obtained as an abstraction of the concrete Markov process $\mathscr M_{\mathfrak s}$. 
The abstraction is associated with a guaranteed and tunable error bound,
and algorithmically it leverages a state-space partitioning procedure.  
The procedure is comprised of two steps: 
\begin{enumerate} 
\item 
since the state space $\mathcal S$ is generally unbounded, 
it is first properly truncated; 
\item 
subsequently,  
a partition of the truncated dynamics is introduced. 
\end{enumerate} 

Section~\ref{sec:trunc} discusses the error generated by the state-space truncation, 
whereas Section~\ref{sec:partition} describes the construction of the Markov chain by state-space partitioning.
The discussed Markov chain abstraction is based on a piecewise-constant approximation of the density functions. 
In order to improve the efficiency and the precision of the approximation, 
we generalise the abstraction method in Section~\ref{sec:approx&error} utilizing higher-order approximations of the density functions.
We employ the following example throughout the article as a running case study. 
\begin{exa}
\label{ex:linear_1d}
Consider the one-dimensional stochastic dynamical system
\begin{equation*}
s(t+1) = a s(t) + b +\sigma w(t),
\end{equation*}
where the parameters $a,\sigma>0$, whereas $b \in \mathbb R$, 
and $w(\cdot)$ is a process comprised of independent, identically distributed random variables with a standard normal distribution. 
The initial state of the process is selected uniformly within the bounded interval $[\beta_0,\gamma_0] \subset \mathbb R$. 
The solution of the model is a Markov process, 
evolving over the state space $\mathcal S = \mathbb R$, 
and fully characterised by the conditional density function 
\begin{equation*}
t_{\mathfrak s}(\bar s|s) = \phi_\sigma(\bar s-as-b),\quad \text{ where } \quad \phi_\sigma(u) = \frac{1}{\sigma\sqrt{2\pi}}e^{-u^2/2\sigma^2}.\eqno{\qEd}
\end{equation*}
\end{exa}

We raise the following assumptions in order to
relate the state density function of $\mathscr M_{\mathfrak s}$ to the probability mass function of $\mathscr M_{\mathfrak p}$.
More precisely, 
these assumptions are employed for the computation of the approximation error, 
but the abstraction approach proposed in this article can be applied without raising them.
\begin{asm}
\label{ass:marg_conv}
For given sets $\varGamma\subset\mathcal S^2$ and $\Lambda_0\subset\mathcal S$,
there exist positive constants $\epsilon$ and $\varepsilon_0$, such that 
$t_{\mathfrak s}(\bar s|s)$ and $\pi_0(s)$ satisfy the following conditions: 
\begin{equation}
\label{eq:ineq_dens}
t_{\mathfrak s}(\bar s|s)\le \epsilon, \,\,
\forall (s,\bar s)\in\mathcal S^2\backslash\varGamma, 
\qquad \text{and}\qquad
\pi_0(s)\le \varepsilon_0, \,\,
\forall s\in\mathcal S\backslash\Lambda_0.
\end{equation}
\end{asm}
\begin{asm}
\label{ass:Lip_cont_bounded}
The density functions $\pi_0(s)$ and $t_{\mathfrak s}(\bar s|s)$ are (globally) Lipschitz continuous,  
namely there exist finite constants $\lambda_0, \lambda_{\mathfrak f}$, 
such that the following Lipschitz continuity conditions hold: 
\begin{align}
& |\pi_0(s)-\pi_0(s')|\le\lambda_0 \|s-s'\|, \quad
\forall s,s'\in\Lambda_0,\label{eq:cond_init_dens}\\
& |t_{\mathfrak s}(\bar s|s)-t_{\mathfrak s}(\bar s'|s)|\le \lambda_{\mathfrak f}\|\bar s-\bar s'\|, \quad
\forall s,\bar s,\bar s'\in\mathcal S.
\label{eq:cond_kern}
\end{align}
Moreover, there exists a finite constant $M_{\mathfrak f}$ such that
\begin{equation}
M_{\mathfrak f} = \sup\left\{\int_{\mathcal S}t_{\mathfrak s}(\bar s|s)ds\bigg| \bar s\in\mathcal S\right\}.
\end{equation}
\end{asm}
The Lipschitz constants $\lambda_0,\lambda_{\mathfrak f}$ are practically computed by taking partial derivatives of the density functions $\pi_0(\cdot),t_{\mathfrak s}(\cdot|s)$ and maximising their norm.
The sets $\Lambda_0$ and $\varGamma$ will be used to truncate the support of the density functions 
$\pi_0(\cdot)$ and $t_{\mathfrak s}(\cdot| s)$, respectively. 
Assumption~\ref{ass:marg_conv} enables the precise study of the behaviour of density functions $\pi_t(\cdot)$
over the truncated state space. 
Furthermore, 
the Lipschitz continuity conditions in Assumption~\ref{ass:Lip_cont_bounded} are essential to derive error bounds related to the abstraction of the Markov process over the truncated state space. 
In order to compute these error bounds, 
we assign the infinity norm to the space of bounded measurable functions over the state space $\mathcal S$, namely 
\begin{equation*}
\|f(s)\|_{\infty} =\sup_{s\in\mathcal S}|f(s)|, \quad \forall f\in
\mathbb B(\mathcal S)\doteq\{g:\mathcal S\rightarrow\mathbb R,\text{ }g\text{ bounded and measurable}\}.
\end{equation*} 
In the sequel the function $\mathds 1_{A}(\cdot)$ denotes the indicator function of a set $A \subseteq \mathcal S$, 
namely $\mathds 1_{A}(s) = 1$, if $s \in A$; else $\mathds 1_{A}(s) = 0$. 

\medskip

\textbf{Example \ref{ex:linear_1d} (Continued).}
Select the interval $\Lambda_0 = [\beta_0,\gamma_0]$ and define the set $\varGamma$ by the linear inequality
\begin{equation*}
\varGamma = \{(s,\bar s)\in\mathbb R^2\big| |\bar s-as-b|\le\alpha\sigma\}.
\end{equation*}
The initial density function $\pi_0$ of the process can be represented by the function 
\begin{equation*}
\pi_0(s)=\mathds 1_{[\beta_0,\gamma_0]}(s)/(\gamma_0-\beta_0).
\end{equation*}
Then Assumption~\ref{ass:marg_conv} is valid with constants $\epsilon = \phi_1(\alpha)/\sigma$ and $\varepsilon_0=0$.
The constant $M_{\mathfrak f}$ in Assumption~\ref{ass:Lip_cont_bounded} is equal to $1/a$.
Lipschitz continuity, as per \eqref{eq:cond_init_dens} and \eqref{eq:cond_kern},
holds for the constants $\lambda_0 = 0$ and $\lambda_{\mathfrak f} = 1/\left(\sigma^2\sqrt{2\pi e}\right)$. 
\hfill \qed

\section{State-Space Truncation Procedure}
\label{sec:trunc}

We limit the support of the density functions $\pi_0,t_{\mathfrak s}$ to the sets $\Lambda_0,\varGamma$ respectively, 
and recursively compute support sets $\Lambda_t$, as in \eqref{eq:recur_supp}, 
that are associated to the density functions $\pi_t$. 
Then we employ the quantities $\epsilon,\varepsilon_0$ in Assumption~\ref{ass:marg_conv} 
to compute bounds $\varepsilon_t$, as in \eqref{eq:trunc_error}, 
on the error incurring in disregarding the value of the density functions $\pi_t$ outside the sets $\Lambda_t$. 
Finally we truncate the original, unbounded state space to the set $\Upsilon = \cup_{t=0}^N\Lambda_t$.  

As intuitive, the error related to the spatial truncation depends on the behaviour of the conditional density function $t_{\mathfrak s}$ over the eliminated regions of the state space. 
Suppose that sets $\varGamma,\Lambda_0$ are selected such that Assumption~\ref{ass:marg_conv} is satisfied with constants $\epsilon,\varepsilon_0$:  
then Theorem~\ref{thm:error_trunc} provides an upper bound on the error obtained from manipulating the density functions in time $\pi_t(\cdot)$ exclusively over the truncated regions of the state space. 
\begin{thm}
\label{thm:bound_behaviour}
Under Assumption~\ref{ass:marg_conv}
the functions $\pi_t$ satisfy the bound
\begin{equation*}
0\le \pi_t(s)\le \varepsilon_t, \quad \forall s\in\mathcal S\backslash\Lambda_t,
\end{equation*}
where the quantities $\{\varepsilon_t,\,t\in\mathbb Z_N\}$ are defined recursively by
\begin{equation}
\label{eq:trunc_error}
\varepsilon_{t+1} = \epsilon+M_{\mathfrak f}\varepsilon_t, 
\end{equation}
whereas the support sets $\{\Lambda_t,\,t\in\mathbb Z_N\}$ are computed as 
\begin{equation}
\label{eq:recur_supp}
\Lambda_{t+1} = \Pi_{\bar s}\left(\varGamma\cap(\Lambda_t\times\mathcal S)\right),
\end{equation}
where $\Pi_{\bar s}$ denotes the projection map along the second set of coordinates\footnote{Recall that both $\varGamma$ and $\Lambda\times \mathcal S$ are defined over $\mathcal S^2 = \mathcal S\times\mathcal S$.}.
\end{thm}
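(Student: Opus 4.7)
The plan is a straightforward induction on $t \in \mathbb Z_N$, leveraging the forward recursion~\eqref{eq:recurs} together with Assumption~\ref{ass:marg_conv} and the mass constant $M_{\mathfrak f}$ from Assumption~\ref{ass:Lip_cont_bounded}. The base case $t=0$ is precisely the second inequality in~\eqref{eq:ineq_dens}, and the lower bound $\pi_t \ge 0$ holds trivially because each $\pi_t$ is a density.

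For the inductive step the key observation is to unpack the projection defining $\Lambda_{t+1}$ in~\eqref{eq:recur_supp}: a point $\bar s$ lies in $\mathcal S \setminus \Lambda_{t+1}$ exactly when $(s,\bar s) \notin \varGamma$ for every $s \in \Lambda_t$, so that the first clause of~\eqref{eq:ineq_dens} supplies the uniform kernel bound $t_{\mathfrak s}(\bar s|s) \le \epsilon$ on the whole slice $\{s \in \Lambda_t\}$. Fixing such a $\bar s$, I would then split the integral in~\eqref{eq:recurs} into its contributions over $\Lambda_t$ and over $\mathcal S \setminus \Lambda_t$. The first piece is bounded by $\epsilon \int_{\Lambda_t} \pi_t(s)\, ds \le \epsilon$, using the uniform kernel bound and the fact that $\pi_t$ integrates to one. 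The second piece is bounded, via the inductive hypothesis $\pi_t(s) \le \varepsilon_t$ on $\mathcal S \setminus \Lambda_t$ together with the definition of $M_{\mathfrak f}$, by $\varepsilon_t \int_{\mathcal S} t_{\mathfrak s}(\bar s|s)\, ds \le M_{\mathfrak f}\, \varepsilon_t$. Adding the two contributions yields exactly $\pi_{t+1}(\bar s) \le \epsilon + M_{\mathfrak f} \varepsilon_t = \varepsilon_{t+1}$, closing the induction.

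The only substantive step is the set-theoretic unpacking of~\eqref{eq:recur_supp}, which converts the purely geometric statement ``$\bar s \notin \Lambda_{t+1}$'' into the universal pointwise bound on $t_{\mathfrak s}(\bar s|\cdot)$ over $\Lambda_t$ needed for the first integral. Once this is in hand, the remainder collapses to elementary monotonicity of the Lebesgue integral along the split $\Lambda_t \sqcup (\mathcal S \setminus \Lambda_t)$. I note that the Lipschitz part of Assumption~\ref{ass:Lip_cont_bounded} plays no role in this argument and will only become relevant for the partition-induced error treated in Section~\ref{sec:partition}.
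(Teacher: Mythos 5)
Your proposal is correct and follows essentially the same argument as the paper: an induction on $t$ whose base case is the second inequality of \eqref{eq:ineq_dens}, with the inductive step splitting the integral in \eqref{eq:recurs} over $\Lambda_t$ and $\mathcal S\setminus\Lambda_t$, bounding the former by $\epsilon$ via the unpacked projection in \eqref{eq:recur_supp} and the latter by $M_{\mathfrak f}\varepsilon_t$ via the inductive hypothesis. Your explicit set-theoretic unpacking of ``$\bar s\notin\Lambda_{t+1}$'' and the remark that the Lipschitz hypotheses are not needed here match the paper's reasoning exactly.
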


\begin{rem}
Notice that if the shape of the sets $\varGamma$ and $\Lambda_0$ is computationally manageable (\eg, if these sets are polytopes),  
then it is possible to precisely implement the computation of the recursion in \eqref{eq:recur_supp} by available software tools, 
such as the MPT toolbox \cite{mpt}. 
Further, 
if for some $t_0$, $\Lambda_{t_0+1}\supset\Lambda_{t_0}$, then for all $t\ge t_0$, $\Lambda_{t+1}\supset\Lambda_t$. 
Similarly, we have that 
\begin{itemize}
\item if for some $t_0$, $\Lambda_{t_0+1}\subset\Lambda_{t_0}$, then for all $t\ge t_0$, $\Lambda_{t+1}\subset\Lambda_t$.
\item if for some $t_0$, $\Lambda_{t_0+1} = \Lambda_{t_0}$, then for all $t\ge t_0$, $\Lambda_{t}=\Lambda_{t_0}$. 
\end{itemize}  
In order to clarify the role of $\varGamma$ in the computation of $\Lambda_t$, 
we emphasize that $\Lambda_{t+1} = \cup_{s\in\Lambda_t}\varXi(s)$, 
where $\varXi$ depends only on $\varGamma$ and is defined by the set-valued map
\begin{equation*}
\varXi:\mathcal S\rightarrow 2^{\mathcal S},\quad \varXi(s) = \{\bar s\in\mathcal S| (s,\bar s)\in\varGamma\}.
\end{equation*}
Figure~\ref{fig:graph_Lambda} provides a visual illustration of the recursion in \eqref{eq:recur_supp}. \qed
\end{rem}

Let us introduce the quantity $\kappa(t,M_{\mathfrak f})$, which plays a role in the solution of \eqref{eq:trunc_error} and will be frequently used shortly:
\begin{equation}
\label{eq:kappa}
\kappa(t,M_{\mathfrak f}) = \left\{
\begin{array}{lll}
\frac{1-M_{\mathfrak f}^t}{1-M_{\mathfrak f}}, & & M_{\mathfrak f}\ne 1\\
t, & & M_{\mathfrak f} = 1.
\end{array}
\right.
\end{equation}

The following theorem provides a truncation procedure, 
valid over a finite time horizon $\mathbb Z_N = \{0,1,\ldots,N\}$, 
which reduces the state space $\mathcal S$ to the set $\Upsilon=\bigcup_{t=0}^{N}\Lambda_t$. 
The theorem also formally quantifies the associated truncation error. 

\begin{thm}
\label{thm:error_trunc}
Suppose that the state space of the process $\mathscr M_{\mathfrak s}$ has been truncated to the set $\Upsilon=\bigcup_{t=0}^{N}\Lambda_t$. 
Let us introduce the following recursion to compute functions $\mu_t:\mathcal S\rightarrow\mathbb R^{\ge 0}$ as an approximation of the density functions $\pi_t$: 
\begin{equation}
\label{eq:rec_trunc}
\mu_{t+1}(\bar s) = \mathds 1_{\Upsilon}(\bar s)\int_{\mathcal S}t_{\mathfrak s}(\bar s|s)\mu_t(s)ds,
\quad \mu_0(s) = \mathds 1_{\Lambda_0}(s)\pi_0(s), 
\quad\forall \bar s\in\mathcal S.
\end{equation}
Then the introduced approximation error is
$\|\pi_t-\mu_t\|_{\infty}\le \varepsilon_t$,
for all $t\in\mathbb Z_N$.
\end{thm}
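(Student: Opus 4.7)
The plan is to proceed by induction on $t \in \mathbb Z_N$, leveraging Theorem~\ref{thm:bound_behaviour} to handle the behaviour of $\pi_t$ outside $\Lambda_t$ (and thus outside $\Upsilon$), and to use Assumption~\ref{ass:Lip_cont_bounded} (specifically the constant $M_{\mathfrak f}$) to control the propagation of the approximation error under the recursion~\eqref{eq:recurs}.

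For the base case $t=0$, since $\mu_0 = \mathds 1_{\Lambda_0}\pi_0$, the difference $\pi_0(s)-\mu_0(s)$ vanishes on $\Lambda_0$ and equals $\pi_0(s)$ on $\mathcal S\backslash\Lambda_0$. Assumption~\ref{ass:marg_conv} then yields $\|\pi_0-\mu_0\|_\infty\le \varepsilon_0$, matching the definition in~\eqref{eq:trunc_error} at $t=0$.

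For the inductive step, assume $\|\pi_t-\mu_t\|_\infty\le \varepsilon_t$, and split the supremum over $\bar s$ into two cases. First, if $\bar s\in\Upsilon$, the indicator in~\eqref{eq:rec_trunc} is inactive, so
\begin{equation*}
\pi_{t+1}(\bar s)-\mu_{t+1}(\bar s)=\int_{\mathcal S}t_{\mathfrak s}(\bar s|s)\bigl(\pi_t(s)-\mu_t(s)\bigr)ds,
\end{equation*}
and taking absolute values, pulling the supremum out and invoking the definition of $M_{\mathfrak f}$ gives the bound $M_{\mathfrak f}\varepsilon_t$. Second, if $\bar s\in\mathcal S\backslash\Upsilon$, then $\mu_{t+1}(\bar s)=0$, and because $\Upsilon\supseteq\Lambda_{t+1}$, the point $\bar s$ lies outside $\Lambda_{t+1}$, so Theorem~\ref{thm:bound_behaviour} directly yields $|\pi_{t+1}(\bar s)-\mu_{t+1}(\bar s)|=\pi_{t+1}(\bar s)\le \varepsilon_{t+1}$.

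Combining both cases and using $\varepsilon_{t+1}=\epsilon+M_{\mathfrak f}\varepsilon_t\ge M_{\mathfrak f}\varepsilon_t$ (since $\epsilon\ge 0$), the worst of the two bounds is $\varepsilon_{t+1}$, closing the induction. The only subtlety worth watching out for is ensuring that the integration variable in the bound involving $M_{\mathfrak f}$ is the conditioning variable $s$ (not the forward variable $\bar s$), which matches the definition of $M_{\mathfrak f}$ in Assumption~\ref{ass:Lip_cont_bounded}; aside from that, the argument is a clean induction and the main conceptual step is recognising that the support set $\Upsilon$ properly contains each $\Lambda_t$ so that Theorem~\ref{thm:bound_behaviour} can be invoked on the complement.
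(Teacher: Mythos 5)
Your proof is correct and follows essentially the same route as the paper's: the same induction, the same case split between $\bar s\in\Upsilon$ (where the kernel integral is bounded via $M_{\mathfrak f}$) and $\bar s\in\mathcal S\backslash\Upsilon\subseteq\mathcal S\backslash\Lambda_{t+1}$ (where Theorem~\ref{thm:bound_behaviour} applies), and the same observation that $\varepsilon_{t+1}=\epsilon+M_{\mathfrak f}\varepsilon_t$ dominates both bounds. Your remark about integrating over the conditioning variable $s$, consistent with the definition of $M_{\mathfrak f}$, is exactly the right point to check and is handled correctly.
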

Recapitulating,  
Theorem~\ref{thm:error_trunc} leads to the following procedure to approximate the density functions $\pi_t$ of $\mathscr M_{\mathfrak s}$ over an unbounded state space $\mathcal S$: 
\begin{enumerate}
\item truncate $\pi_0$ in such a way that $\mu_0$ has a bounded support $\Lambda_0$; 
\item truncate the conditional density function $t_{\mathfrak s}(\cdot |s)$ over
a bounded set for all $s\in\mathcal S$, 
then quantify $\varGamma\subset\mathcal S^2$ as the support of the truncated density function;
\item leverage the recursion in \eqref{eq:recur_supp} to compute the support sets $\Lambda_t$; 
\item use the recursion in \eqref{eq:rec_trunc} to compute the approximate density functions $\mu_{t}$ over the set $\Upsilon = \cup_{t=0}^{N}\Lambda_t$.
Note that the recursion in \eqref{eq:rec_trunc} is effectively computed over the set $\Upsilon$, since $\mu_t(s) = 0$ for all $s\in\mathcal S\backslash\Upsilon$.  
\end{enumerate}

\noindent Note that we could as well handle the support of $\mu_t(\cdot)$ over the time-varying sets $\Lambda_t$, 
by adapting the recursion in \eqref{eq:rec_trunc} with $\mathds 1_{\Lambda_{t+1}}$ instead of $\mathds 1_{\Upsilon}$.  
However, 
while employing the (larger) set $\Upsilon$ may lead to an increase in memory requirements at each stage,  
it will considerably simplify the computations of the state-space partitioning and of the Markov chain abstraction:   
indeed, employing time-varying sets $\Lambda_t$ would render the partitioning procedure also time-dependent, 
and the obtained Markov chain would be time-inhomogeneous.
We therefore opt to work directly with set $\Upsilon$ in order to avoid these difficulties.  

\medskip 

\noindent\textbf{Example \ref{ex:linear_1d} (Continued).}
We easily obtain a closed form for the sets $\Lambda_t = [\beta_t,\gamma_t]$, via 
\begin{equation*}
\beta_{t+1} = a\beta_t+b-\alpha\sigma,\quad
\gamma_{t+1} = a\gamma_t+b+\alpha\sigma.
\end{equation*}
Set $\Upsilon$ is the union of intervals $[\beta_t,\gamma_t]$. 
The error of the state-space truncation over $\Upsilon$ is 
\begin{equation*}
\|\pi_t-\mu_t\|_\infty
\le \varepsilon_t = \kappa(t,M_{\mathfrak f})\frac{\phi_1(\alpha)}{\sigma},\quad M_{\mathfrak f} = \frac{1}{a}.
\end{equation*}
\hfill \qed

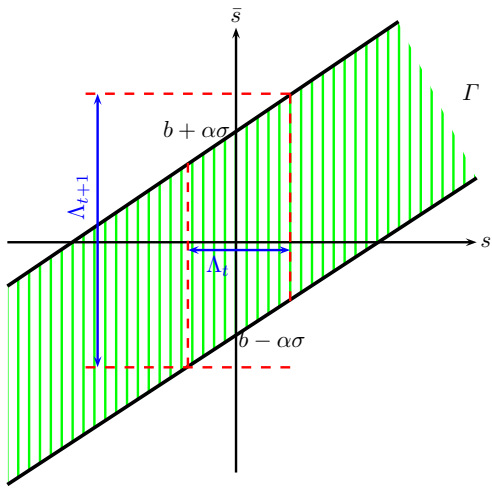
\begin{figure}
\centering
\scalebox{0.8}
{
\begin{pspicture}(-0.5,-3.6)(8,3.6)
\pspolygon[linewidth=0.0020,linecolor=white,fillstyle=vlines,hatchwidth=0.04,hatchangle=0.0,hatchcolor=green](0,-0.6)(6.5,3.8)(7.8,1.2)(0,-3.9)
\psline[linewidth=0.04cm,arrowsize=0.05291667cm 2.0,arrowlength=1.4,arrowinset=0.4]{->}(0,0.13)(7.8,0.13)
\psline[linewidth=0.04cm,arrowsize=0.05291667cm 2.0,arrowlength=1.4,arrowinset=0.4]{->}(3.8,-3.7)(3.8,3.7)
\rput(3.8,3.9){$\bar s$}
\rput(7.95,0.13){$s$}
\psline[linewidth=0.06cm](0,-0.6)(6.5,3.8)
\psline[linewidth=0.06cm](0,-3.9)(7.8,1.2)
\rput(4.4,-1.5){$b-\alpha\sigma$}
\rput(3.15,2){$b+\alpha\sigma$}
\rput(7.7,2.63){$\varGamma$}
\rput(3.5,-0.27){$\textcolor{blue}{\Lambda_t}$}
\psline[linewidth=0.04cm,linecolor=red,linestyle=dashed,dash=0.16cm 0.16cm](3,-1.95)(3,1.45)
\psline[linewidth=0.04cm,linecolor=red,linestyle=dashed,dash=0.16cm 0.16cm](4.7,-0.85)(4.7,2.6)
\psline[linewidth=0.04cm,linecolor=red,linestyle=dashed,dash=0.16cm 0.16cm](4.7,2.6)(1.3,2.6)
\psline[linewidth=0.04cm,linecolor=red,linestyle=dashed,dash=0.16cm 0.16cm](4.7,-1.95)(1.3,-1.95)
\psline[linewidth=0.04cm,linecolor=blue,arrowsize=0.05291667cm 2.0,arrowlength=1.4,arrowinset=0.4]{<->}(1.5,-1.95)(1.5,2.6)
\rput{90.0}(1.9,-0.3){\rput(1.1914062,0.71){\textcolor{blue}{$\Lambda_{t+1}$}}}
\psline[linewidth=0.04cm,linecolor=blue,arrowsize=0.05291667cm 2.0,arrowlength=1.4,arrowinset=0.4]{<->}(3,0)(4.7,0)
\end{pspicture} 
}
\caption{Graphical representation of the recursion in \eqref{eq:recur_supp} for sets $\Lambda_t$.}
\label{fig:graph_Lambda}
\end{figure}

The recursion in \eqref{eq:rec_trunc} indicates that the support of functions $\mu_t$ are always contained in the set $\Upsilon$ (namely, they are equal to zero over the complement of $\Upsilon$). 
We are thus only interested in computing these functions over the set $\Upsilon$,  
which allows
simplifying the recursion in \eqref{eq:rec_trunc} as follows:  
\begin{equation}
\label{eq:recursive_RA2}
\mu_{t+1}(\bar s)=\int_{\Upsilon}t_{\mathfrak s}(\bar s|s)\mu_{t}(s)ds,\quad\forall \bar s\in \Upsilon,\quad t \in \mathbb Z_{N-1}.
\end{equation}

\section{Markov Chain abstraction via State-Space Partitioning}
\label{sec:partition}
In this section we assume that sets $\varGamma,\Lambda_0$ have been properly selected so that $\Upsilon = \cup_{t=0}^{N}\Lambda_t$ is bounded.
In order to formally abstract process $\mathscr M_{\mathfrak s}$ as a finite Markov chain $\mathscr M_{\mathfrak p}$ and to approximate its state density functions,
we select a finite partition of the bounded set $\Upsilon$ as $\Upsilon = \cup_{i=1}^n \mathcal A_i$, 
where sets $\mathcal A_i$ have non-trivial measure.   
We then complete the partition over the whole state space $\mathcal S = \cup_{i=1}^{n+1}\mathcal A_i$ 
by additionally including set $\mathcal A_{n+1} = \mathcal S\backslash\Upsilon$.  
This results in a finite Markov chain $\mathscr M_{\mathfrak p}$ with $n+1$ discrete abstract states in the set $\mathbb N_{n+1} = \{1,2,\ldots,n,n+1\}$, 
and characterised by the transition probability matrix $P = [P_{ij}]\in\mathbb R^{(n+1)\times(n+1)}$,
where the probability of jumping from any pair of states $i$ to $j$ ($P_{ij}$) is computed as 
\begin{equation}
\label{eq:trans_elem}
\begin{array}{l}
P_{ij} = \frac{1}{\mathcal L(\mathcal A_i)}\int_{\mathcal A_j}\int_{\mathcal A_i}t_{\mathfrak s}(\bar s|s)ds d\bar s,\quad\forall i\in\mathbb N_n,\\ 
P_{(n+1)j} = \delta_{(n+1)j}, 
\end{array}
\end{equation}
for all $j\in\mathbb N_{n+1}$, 
and where $\delta_{(n+1)j}$ is the Kronecker delta function (the abstract state $n+1$ of $\mathscr M_{\mathfrak p}$ is absorbing), 
and $\mathcal L(\cdot)$ denotes the Lebesgue measure of a set (i.e., its volume).    
The quantities in \eqref{eq:trans_elem} are well-defined since the set $\Upsilon$ is bounded and the measures $\mathcal L(\mathcal A_i), i\in \mathbb N_n$, 
are finite and non-trivial. 
Notice that matrix $P$ for the Markov chain $\mathscr M_{\mathfrak p}$ is stochastic, namely
\begin{align*}
\sum_{j=1}^{n+1} P_{ij}
& = \sum_{j=1}^{n+1}\frac{1}{\mathcal L(\mathcal A_i)}\int_{\mathcal A_j}\int_{\mathcal A_i}t_{\mathfrak s}(\bar s|s)ds d\bar s
= \frac{1}{\mathcal L(\mathcal A_i)}\int_{\mathcal A_i}\left(\sum_{j=1}^{n+1}\int_{\mathcal A_j}t_{\mathfrak s}(\bar s|s)d\bar s\right)ds\\
& = \frac{1}{\mathcal L(\mathcal A_i)}\int_{\mathcal A_i}\int_{\mathcal S}t_{\mathfrak s}(\bar s|s)d\bar s ds
= \frac{1}{\mathcal L(\mathcal A_i)}\int_{\mathcal A_i}ds
= 1.
\end{align*}
The initial distribution of $\mathscr M_{\mathfrak p}$ is the pmf $\mathbf{p_0} = [p_0(1),p_0(2),\ldots,p_0(n+1)]$, 
and it is obtained from $\pi_0$ as
$p_0(i) = \int_{\mathcal A_i}\pi_0(s)ds, \forall i\in\mathbb N_{n+1}$.
Then the pmf associated to the state distribution of $\mathscr M_{\mathfrak p}$ at time $t$ can be computed as $\mathbf{p_t} = \mathbf{p_0} P^t$.

It is intuitive that the discrete pmf $\mathbf{p_t}$ of the Markov chain $\mathscr M_{\mathfrak p}$ approximates the continuous density function $\pi_t$ of the Markov process $\mathscr M_{\mathfrak s}$. 
In the rest of the section we show how to formalise this relationship: 
$\mathbf{p_t}$ is used to construct an approximate function, denoted by $\psi_t$, 
of the density $\pi_t$. 
Theorem~\ref{thm:Error} shows that $\psi_t$ is a piecewise constant approximation 
(with values in its codomain that are the entries of the pmf $\mathbf{p_t}$, normalised by the Lebesgue measure of the associated partition set) 
of the original density function $\pi_t$. 
Moreover, 
under the continuity assumption in \eqref{eq:cond_kern} (ref. Lemma~\ref{lmm:lip_cont}) we can establish the Lipschitz continuity of $\pi_t$, 
which enables the quantification (in Theorem~\ref{thm:Error}) of the error related to its piecewise constant approximation $\psi_t$.    

\begin{lem}
\label{lmm:lip_cont}
Suppose that the inequality in \eqref{eq:cond_kern} holds. 
Then the state density functions $\pi_t(\cdot)$ are globally Lipschitz continuous with constant $\lambda_{\mathfrak f}$ for all $t\in\mathbb N$: 
\begin{equation*}
|\pi_t(s)-\pi_t(s')|\le \lambda_{\mathfrak f}\|s-s'\|,\quad\forall s,s'\in\mathcal S. 
\end{equation*}
\end{lem}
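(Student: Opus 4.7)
The plan is to apply the forward recursion \eqref{eq:recurs} directly and exploit the Lipschitz hypothesis \eqref{eq:cond_kern} on the kernel $t_{\mathfrak s}$ in its first argument. Observe that \eqref{eq:cond_kern} gives a Lipschitz constant $\lambda_{\mathfrak f}$ that is \emph{uniform} in the conditioning variable $s$; this uniformity is exactly what allows the constant to be preserved through integration against any probability density.

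Concretely, I would fix $t\in\mathbb N$ and two points $s,s'\in\mathcal S$, write
\begin{equation*}
\pi_t(s)-\pi_t(s') = \int_{\mathcal S}\bigl[t_{\mathfrak s}(s|\tilde s)-t_{\mathfrak s}(s'|\tilde s)\bigr]\pi_{t-1}(\tilde s)\,d\tilde s
\end{equation*}
using \eqref{eq:recurs}, then pass the absolute value under the integral (triangle inequality) and apply \eqref{eq:cond_kern} pointwise in $\tilde s$, factoring the resulting $\lambda_{\mathfrak f}\|s-s'\|$ outside. What remains is $\int_{\mathcal S}\pi_{t-1}(\tilde s)\,d\tilde s$, which equals $1$ since $\pi_{t-1}$ is a probability density. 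This yields $|\pi_t(s)-\pi_t(s')|\le \lambda_{\mathfrak f}\|s-s'\|$ without any degradation of the constant, and without needing a Lipschitz assumption on $\pi_{t-1}$ itself.

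Since the argument uses only that $\pi_{t-1}$ integrates to $1$, and this is preserved by the recursion \eqref{eq:recurs} (as verified directly, or inherited from the stochasticity of $T_{\mathfrak s}$), the bound holds for every $t\in\mathbb N$. No induction on $t$ is actually required beyond noting the normalisation of $\pi_{t-1}$; in particular the constant does not blow up with $t$.

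There is no real obstacle: the step that deserves a line of justification is the interchange of absolute value and integral, which is legitimate because $\pi_{t-1}\ge 0$ and $t_{\mathfrak s}(\cdot|\tilde s)$ is integrable (being a conditional density). Everything else is a one-line estimate.
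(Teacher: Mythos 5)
Your proposal is correct and follows essentially the same route as the paper's proof: write $\pi_t(s)-\pi_t(s')$ via the recursion \eqref{eq:recurs}, move the absolute value inside the integral, apply \eqref{eq:cond_kern} uniformly in the conditioning variable, and conclude using $\int_{\mathcal S}\pi_{t-1}(\tilde s)\,d\tilde s = 1$. Your additional remarks on why no induction is needed and why the interchange is legitimate are accurate but not essential.
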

\begin{thm}
\label{thm:Error}
Under Assumptions~\ref{ass:marg_conv} and \ref{ass:Lip_cont_bounded}, 
the functions $\pi_t(\cdot)$ can be approximated by piecewise constant functions $\psi_t(\cdot)$, defined as
\begin{equation}
\label{eq:approx_marg}
\psi_t(s) = \sum_{i=1}^{n}\frac{p_t(i)}{\mathcal L(\mathcal A_i)}\mathds 1_{\mathcal A_i}(s),\quad \forall t\in\mathbb N,
\end{equation}
where $\mathds 1_{B}(\cdot)$ is the indicator function of a set $B\subset\mathcal S$.
The approximation error is upper-bounded by the quantity
\begin{equation}
\label{eq:error}
\|\pi_t-\psi_t\|_\infty \le\varepsilon_t + E_t,\quad \forall t\in\mathbb N,
\end{equation}
where $E_t$ can be recursively computed as 
\begin{equation}
\label{eq:part_error_rec}
E_{t+1} = M_{\mathfrak f} E_t+\lambda_{\mathfrak f}\delta,\quad E_0 = \lambda_0\delta,
\end{equation}
and $\delta$ is an upper bound on the diameters of the partition sets $\{\mathcal A_i\}_{i=1}^{n}$, namely
$\delta\, = \sup\left\{\|s-s'\|,\,\, s,s'\in\mathcal A_i,\,\, i\in\mathbb N_n\right\}$.
\end{thm}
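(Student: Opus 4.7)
\medskip

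\noindent\textbf{Proof plan for Theorem~\ref{thm:Error}.} My plan is to split the total error by the triangle inequality
\[
\|\pi_t-\psi_t\|_\infty \le \|\pi_t-\mu_t\|_\infty+\|\mu_t-\psi_t\|_\infty,
\]
where $\mu_t$ is the truncated iterate defined in \eqref{eq:rec_trunc}. The first term is already under control: Theorem~\ref{thm:error_trunc} gives $\|\pi_t-\mu_t\|_\infty\le\varepsilon_t$. The whole work therefore concentrates on proving, by induction on $t$, that $\|\mu_t-\psi_t\|_\infty\le E_t$ with $E_t$ satisfying the recursion in \eqref{eq:part_error_rec}. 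Since both $\mu_t$ (for $t\ge1$) and $\psi_t$ vanish on $\mathcal S\setminus\Upsilon$, the supremum can be restricted to $\Upsilon$.

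For the base case I would observe that, by the definition of $p_0(i)$, on each partition cell $\mathcal A_i$ ($i\in\mathbb N_n$) the piecewise-constant value $p_0(i)/\mathcal L(\mathcal A_i)$ coincides with the average of $\pi_0$ over $\mathcal A_i$; comparing with $\mu_0(s)=\mathds 1_{\Lambda_0}(s)\pi_0(s)$ and invoking the Lipschitz bound \eqref{eq:cond_init_dens} with constant $\lambda_0$ on $\Lambda_0$ yields $|\mu_0(s)-\psi_0(s)|\le\lambda_0\delta=E_0$ for $s\in\Upsilon$ (cells that lie entirely outside $\Lambda_0$ contribute zero on both sides).

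For the inductive step I would first rewrite $\psi_{t+1}$ in a form that is directly comparable with $\mu_{t+1}$. Using \eqref{eq:trans_elem} and $p_{t+1}(j)=\sum_{i\in\mathbb N_n}p_t(i)P_{ij}$, for $\bar s\in\mathcal A_j$ one obtains
\[
\psi_{t+1}(\bar s)=\frac{1}{\mathcal L(\mathcal A_j)}\int_{\mathcal A_j}\!\!\int_{\Upsilon} t_{\mathfrak s}(\bar s'|s)\psi_t(s)\,ds\,d\bar s',
\]
that is, the average over $\mathcal A_j$ of the function $f_t(\bar s')\doteq\int_{\Upsilon}t_{\mathfrak s}(\bar s'|s)\psi_t(s)\,ds$, while \eqref{eq:recursive_RA2} gives $\mu_{t+1}(\bar s)=\int_{\Upsilon}t_{\mathfrak s}(\bar s|s)\mu_t(s)\,ds$. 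The difference then splits into
\[
\bigl|\mu_{t+1}(\bar s)-\psi_{t+1}(\bar s)\bigr|\le \int_{\Upsilon} t_{\mathfrak s}(\bar s|s)\,|\mu_t(s)-\psi_t(s)|\,ds+\left|f_t(\bar s)-\tfrac{1}{\mathcal L(\mathcal A_j)}\!\int_{\mathcal A_j}\! f_t(\bar s')\,d\bar s'\right|.
\]
The first summand is bounded by $M_{\mathfrak f}\|\mu_t-\psi_t\|_\infty\le M_{\mathfrak f}E_t$ thanks to the constant $M_{\mathfrak f}$ in Assumption~\ref{ass:Lip_cont_bounded}. For the second summand I would use \eqref{eq:cond_kern} together with $\int_{\Upsilon}\psi_t(s)\,ds=\sum_{i\in\mathbb N_n}p_t(i)\le 1$ to conclude that $f_t$ is Lipschitz in $\bar s'$ with constant at most $\lambda_{\mathfrak f}$, so that the deviation from its mean over $\mathcal A_j$ is bounded by $\lambda_{\mathfrak f}\delta$. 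Adding both contributions gives exactly the recursion $E_{t+1}=M_{\mathfrak f}E_t+\lambda_{\mathfrak f}\delta$.

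The main obstacle I anticipate is the inductive step, and specifically the need to recognise $\psi_{t+1}$ as the cell-average of a Lipschitz function so that the Lipschitz constant $\lambda_{\mathfrak f}$ of $t_{\mathfrak s}$ in the first argument (and not of $\psi_t$, which is discontinuous) is what controls the piecewise-constant projection error. Once this algebraic manipulation is in place, the rest reduces to the standard contraction argument governed by $M_{\mathfrak f}$.
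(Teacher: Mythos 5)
Your proposal is correct and follows essentially the same route as the paper's proof: the triangle-inequality split into truncation error $\varepsilon_t$ plus partition error $E_t$, the identification of $\psi_{t+1}$ on $\mathcal A_j$ as the cell average of $\mathcal R(\psi_t)$, and the inductive decomposition into a term bounded by $M_{\mathfrak f}E_t$ and a projection term bounded by $\lambda_{\mathfrak f}\delta$ via the Lipschitz continuity of $t_{\mathfrak s}(\cdot|s)$ in its first argument together with $\int\psi_t\le 1$. The paper merely packages the same steps in terms of the operators $\mathcal R$ and $\mathcal R_a$.
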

Note that the functions $\psi_t$ are defined over the whole state space $\mathcal S$, 
but \eqref{eq:approx_marg} implies that they are equal to zero outside the set $\Upsilon$.
\begin{cor}
The recursion in \eqref{eq:part_error_rec} admits the explicit solution 
\begin{equation*}
E_t = \left[\kappa(t,M_{\mathfrak f})\lambda_{\mathfrak f}+M_{\mathfrak f}^t\lambda_0\right]\delta,
\end{equation*}
where $\kappa(t,M_{\mathfrak f})$ is introduced in \eqref{eq:kappa}.  
\end{cor}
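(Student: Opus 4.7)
The plan is to solve the scalar linear recurrence in \eqref{eq:part_error_rec} explicitly, which is a routine first-order non-homogeneous recursion with constant coefficients. I would proceed by induction on $t$ (or equivalently by direct unrolling of the recursion), and then separate the two cases $M_{\mathfrak f} \neq 1$ and $M_{\mathfrak f} = 1$ only at the very end, when simplifying the geometric sum that arises.

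First I would unroll the recursion starting from $E_0 = \lambda_0 \delta$. Applying \eqref{eq:part_error_rec} once gives $E_1 = M_{\mathfrak f}\lambda_0\delta + \lambda_{\mathfrak f}\delta$, and iterating $t$ times yields
\begin{equation*}
E_t \,=\, M_{\mathfrak f}^{\,t}\,E_0 \,+\, \lambda_{\mathfrak f}\delta\sum_{k=0}^{t-1} M_{\mathfrak f}^{\,k} \,=\, M_{\mathfrak f}^{\,t}\lambda_0\delta \,+\, \lambda_{\mathfrak f}\delta\sum_{k=0}^{t-1} M_{\mathfrak f}^{\,k}.
\end{equation*}
A clean way to justify this closed form rigorously is induction: the base case $t=0$ reduces to $E_0 = \lambda_0\delta$, matching the initial condition; for the inductive step, I would substitute the assumed formula for $E_t$ into the right-hand side of \eqref{eq:part_error_rec} and verify that it equals the corresponding formula at $t+1$ after re-indexing the geometric sum.

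It then remains to identify $\sum_{k=0}^{t-1} M_{\mathfrak f}^{\,k}$ with $\kappa(t,M_{\mathfrak f})$. By the standard geometric series identity, when $M_{\mathfrak f}\neq 1$ this sum equals $(1-M_{\mathfrak f}^{\,t})/(1-M_{\mathfrak f})$, and when $M_{\mathfrak f}=1$ it collapses to $t$; these are exactly the two branches in the definition \eqref{eq:kappa} of $\kappa(t,M_{\mathfrak f})$. Substituting gives $E_t = \bigl[\kappa(t,M_{\mathfrak f})\,\lambda_{\mathfrak f} + M_{\mathfrak f}^{\,t}\,\lambda_0\bigr]\delta$, as claimed.

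There is no substantive obstacle here; the only mild subtlety is handling the degenerate case $M_{\mathfrak f}=1$ separately, which is why $\kappa$ is defined piecewise. Everything else is elementary manipulation of a geometric series, and no assumption beyond the recursion itself (and in particular no continuity or measurability property of the underlying process) is needed for the algebraic identity.
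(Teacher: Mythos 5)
Your proposal is correct and is exactly the routine computation the paper has in mind (the paper in fact omits a proof of this corollary entirely, treating it as an immediate unrolling of the linear recurrence). Unwinding $E_{t+1}=M_{\mathfrak f}E_t+\lambda_{\mathfrak f}\delta$ to $E_t=M_{\mathfrak f}^t\lambda_0\delta+\lambda_{\mathfrak f}\delta\sum_{k=0}^{t-1}M_{\mathfrak f}^k$ and recognising the geometric sum as $\kappa(t,M_{\mathfrak f})$ in both branches of \eqref{eq:kappa} is all that is needed.
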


Underlying Theorem~\ref{thm:Error} is the fact that $\psi_t(\cdot)$ are in general sub-stochastic density functions:  
\begin{align*}
\int_{\mathcal S}\psi_t(s)ds 
& = \int_{\mathcal S}\sum_{i=1}^{n}\frac{p_t(i)}{\mathcal L(\mathcal A_i)}\mathds 1_{\mathcal A_i}(s)ds
= \sum_{i=1}^{n}\frac{p_t(i)}{\mathcal L(\mathcal A_i)}\int_{\mathcal S}\mathds 1_{\mathcal A_i}(s)ds\\
& = \sum_{i=1}^{n}\frac{p_t(i)}{\mathcal L(\mathcal A_i)}\mathcal L(\mathcal A_i) = \sum_{i=1}^{n}p_t(i)=1-p_t(n+1)\le 1.
\end{align*} 
This is clearly due to the fact that we are operating on the dynamics of $\mathscr M_{\mathfrak s}$ truncated over the set $\Upsilon$.  
It is thus intuitive that the approximation procedure and the derived error bounds are also valid for the case of sub-stochastic density functions \cite{AKNP14}, 
namely 
\begin{equation*}
\int_{\mathcal S}t_{\mathfrak s}(\bar s|s)d\bar s\le 1,\quad\forall s\in\mathcal S,\quad \int_{\mathcal S}\pi_0(s)ds\le 1, 
\end{equation*}
the only difference being that the obtained Markov chain $\mathscr M_{\mathfrak p}$ is as well sub-stochastic.  
 
Further, 
whenever the Lipschitz continuity requirement on the initial density function, as per \eqref{eq:cond_init_dens} in Assumption~\ref{ass:Lip_cont_bounded}, does not hold, 
(for instance, this is the case when the initial state of the process is deterministic)
we can relax this continuity assumption on the initial distribution of the process by starting the discrete computation from the time step $t=1$. 
In this case we define the pmf $\mathbf{p_1}= [p_1(1),p_1(2),\ldots,p_1(n+1)]$,
where
\begin{equation}
\label{eq:relax_cont}
p_1(i) = \int_{\mathcal A_i}\int_{\mathcal S}t_{\mathfrak s}(\bar s|s)\pi_0(s)ds d\bar s, \quad \forall i\in\mathbb N_{n+1},
\end{equation}
and derive $\mathbf{p_t} = \mathbf p_1 P^{t-1}$ for all $t\in\mathbb N$. 
Theorem~\ref{thm:Error} follows along similar lines, 
except for equation \eqref{eq:part_error_rec}, where the initial error is set to $E_0 = 0$ 
and the time-dependent terms $E_t$ can be derived as 
$E_t = \kappa(t,M_{\mathfrak f})\lambda_{\mathfrak f}\delta$.

It is important to emphasise the practical computability of the derived errors, 
and the fact that they can be tuned by selecting a finer partition of set $\Upsilon$ that relates to a smaller global parameter $\delta$. 
Further,
in order to attain abstractions that are practically useful, it imperative to seek improvements on the derived error bounds: 
in particular, the approximation errors can be computed locally (under corresponding local Lipschitz continuity assumptions), 
following the procedures discussed in \cite{SA13}.

\medskip

\noindent\textbf{Example \ref{ex:linear_1d} (Continued).}
The error of the Markov chain abstraction can be expressed as
\begin{equation}
\label{eq:error_example_1}
\|\pi_t-\psi_t\|_\infty
\le \kappa(t,M_{\mathfrak f})\left[\frac{\delta}{\sigma^2\sqrt{2\pi e}}+\frac{\phi_1(\alpha)}{\sigma}\right]
,\quad M_{\mathfrak f} = \frac{1}{a}.
\end{equation}
The error can be tuned in two distinct ways: 
\begin{enumerate}
\item 
by selecting larger values for $\alpha$, 
which on the one hand leads to a less narrow truncation, 
but on the other requires the partition of a larger interval;  
\item 
by reducing partitions diameter $\delta$, which of course results in a larger cardinality of the partition sets.   
\end{enumerate}
Let us select values $b = 0, \beta_0 = 0,\gamma_0 = 1,\sigma = 0.1$, and time horizon $N=5$. 
For $a=1.2$ we need to partition the interval $\Upsilon = \left[-0.75\alpha,2.49+0.75\alpha\right]$, 
which results in the error $\|\pi_t-\psi_t\|_\infty\le 86.8\delta+35.9\phi_1(\alpha)$ for all $t\in\mathbb Z_N$.
For $a = 0.8$ we need to partition the smaller interval
$\Upsilon = \left[-0.34\alpha,0.33+0.34\alpha\right]$, 
which results in the error $\|\pi_t-\psi_t\|_\infty\le 198.6\delta+82.1\phi_1(\alpha)$ for all $t\in\mathbb Z_N$. 
Notice that in the case of $a = 1.2$, 
we partition a larger interval and obtain a smaller error, 
while for $a=0.8$ we partition a smaller interval with correspondingly a larger error. 
It is obvious that the parameters $\delta,\alpha$ can be chosen properly to ensure that a certain error precision is met.
This simple model admits a solution in closed form, 
and its state density functions can be obtained as the convolution of a uniform distribution (the contribution of initial state) and a zero-mean Gaussian distribution with time-dependent variance (the contributions of the process noise). 
This leads to the plots in Figure~\ref{fig:1d_density}, 
which display the original and the approximated state density functions for the set of parameters $\alpha = 2.4,\delta = 0.05$. 
\hfill \qed
\begin{figure}
\centering
\includegraphics[scale = 0.5]{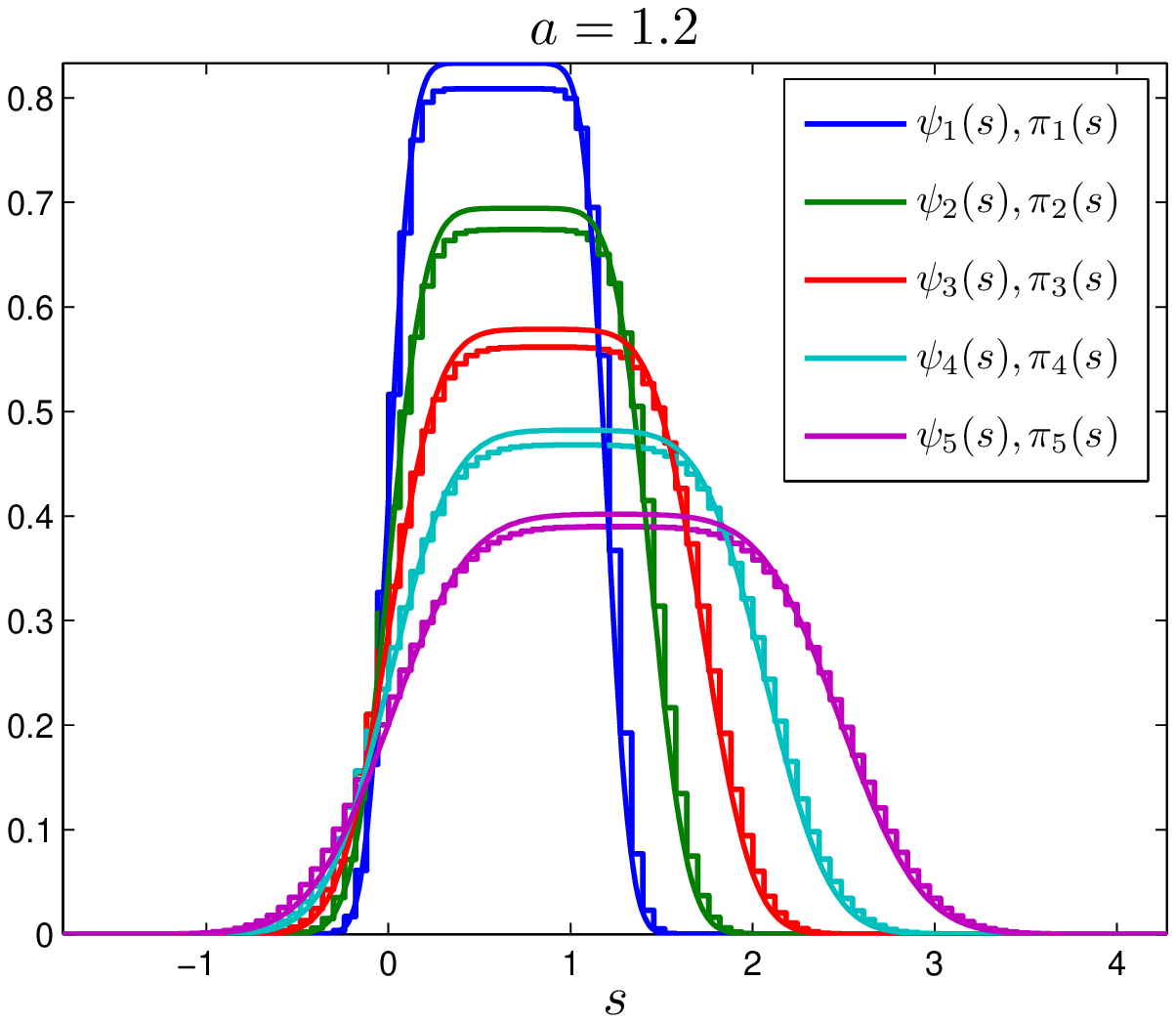}
\includegraphics[scale = 0.5]{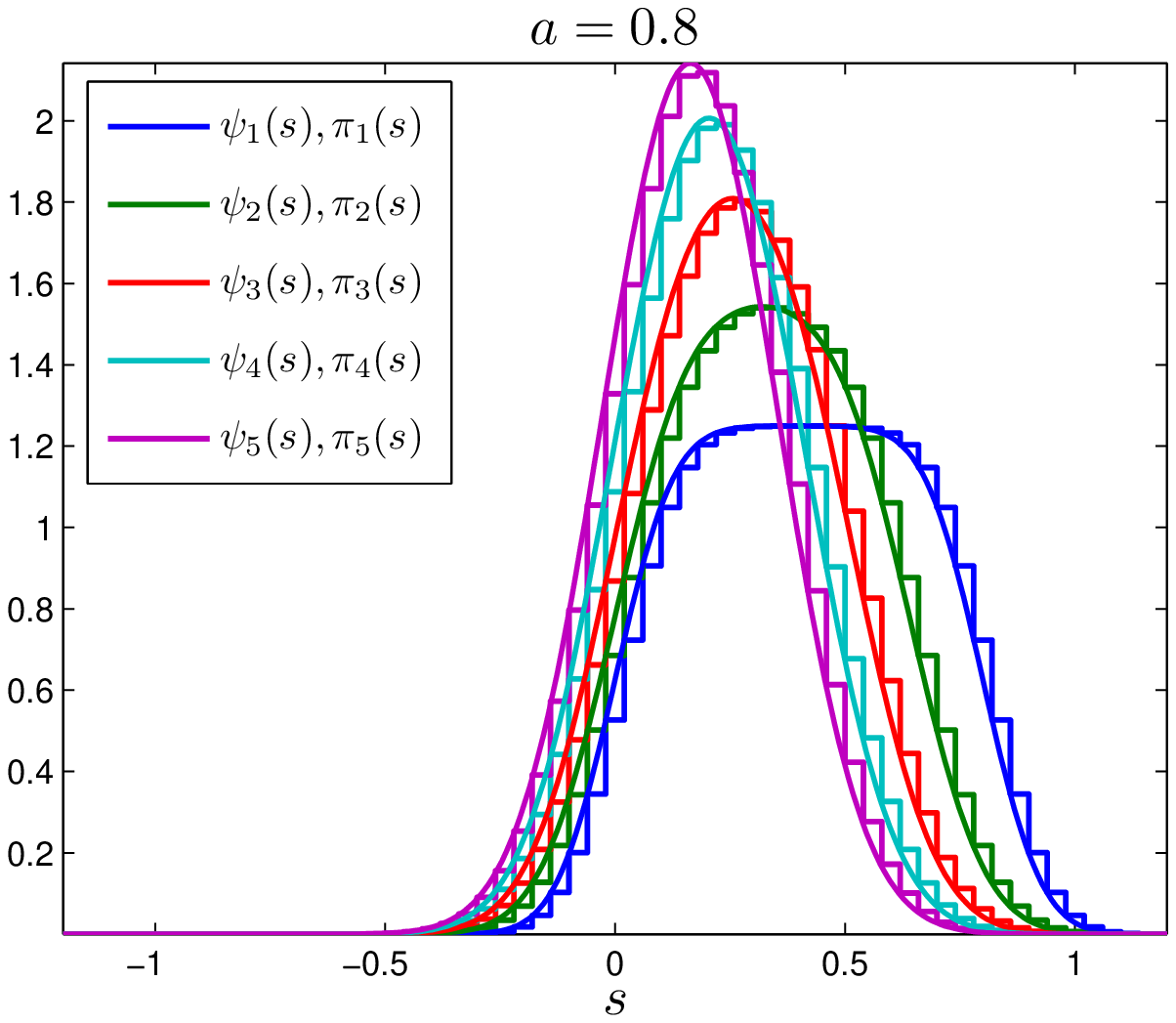}
\caption{Piecewise constant approximation $\psi_t(\cdot)$ of the state density function $\pi_t(\cdot)$ (derived analytically), 
for parameters $a=1.2$ (left) and $a=0.8$ (right).} 
\label{fig:1d_density}
\end{figure}

\section{Higher-Order Approximation Schemes}
\label{sec:approx&error}
In the previous section we have shown that a Markov chain abstraction can be employed to formally approximate the density function of a Markov process in time. 
This abstraction is interpreted as a piecewise-constant approximation of the density function of the Markov model. 
In this section we argue that this procedure can be extended to approximation techniques based on higher-order interpolations. 

With focus on the truncated region of the state space, 
let us denote with $\mathbb B(\Upsilon)$ the space of bounded and measurable functions $f:\Upsilon\rightarrow\mathbb R$,
equipped with the infinity norm
$\Vert f\Vert_{\infty} = \sup\{|f(s)|, \, s\in\Upsilon\}$, 
for all $f\in \mathbb B(\Upsilon)$. 
The linear operator $\mathcal R_\Upsilon$, defined over $\mathbb B(\Upsilon)$ by 
\begin{equation}\label{eq:operator}
\mathcal R_\Upsilon (f)(\bar s) = \int_{\Upsilon}t_{\mathfrak s}(\bar s|s)f(s)ds, 
\quad \forall \bar s\in\Upsilon,
\quad \forall f\in\mathbb B(\Upsilon),
\end{equation}
characterises the solution of the recursion in \eqref{eq:recursive_RA2} as
$\mu_t(s) = \mathcal R_\Upsilon^t(\mu_0) (s)$, for any $t\in\mathbb N_N$.
While in Section~\ref{sec:partition} we have proposed approximations of functions $\mu_t(\cdot)$ by piecewise-constant 
functions $\psi_t(\cdot)$ with an explicit quantification of the associated error, 
in this section we are interested in considering approximations via higher-order interpolations.

\subsection{Quantification of the Error of a Projection Over a Function Space}

Consider a set of basis functions $\Phi = \{\phi_1(s),\phi_2(s),\ldots,\phi_h(s)\}$, $h\in\mathbb N$,
the function space $\Psi = span\,\Phi$ generated by this set as a subset of $\mathbb B(\Upsilon)$, 
and a linear operator $\Pi_{\Upsilon}:\mathbb B(\Upsilon)\rightarrow \Psi$, 
which projects any function $f\in\mathbb B(\Upsilon)$ onto the function space $\Psi$.
Theorem~\ref{thm:error_dynamic} provides a theoretical result for approximating the solution of \eqref{eq:recursive_RA2}: 
the following section provides details on turning this result into a useful tool for approximations.
\begin{thm}\label{thm:error_dynamic}
Assume that a linear projection operator $\Pi_{\Upsilon}:\mathbb B(\Upsilon)\rightarrow \Psi$ satisfies the inequality
\begin{equation}\label{eq:Gener_opr}
\left\| \Pi_{\Upsilon}(t_{\mathfrak s}(\cdot|s)) - t_{\mathfrak s}(\cdot|s)\right\|_\infty\le\mathcal E^{\mathfrak h},  
\quad \forall s\in\Upsilon,
\end{equation}
and that there exists a finite constant $M_{\mathfrak f}^{\mathfrak h}$, 
such that
\begin{equation}\label{eq:constant_M}
\int_{\Upsilon}\left|\Pi_{\Upsilon}(t_{\mathfrak s}(\bar s|s))\right| ds\le M_{\mathfrak f}^{\mathfrak h}, 
\quad \forall \bar s\in\Upsilon.
\end{equation}
Define the functions $\psi_t^{\mathfrak h}(\cdot)$ as approximations of $\mu_t(\cdot)$ (cf. \,\eqref{eq:operator}), by
\begin{equation}\label{eq:dis_val_fun}
\psi_t^{\mathfrak h} = (\Pi_{\Upsilon}\mathcal R_\Upsilon)^t(\mu_0), \quad t\in\mathbb Z_N.
\end{equation}
Then it holds that 
\begin{equation}\label{eq:Main_inequality}
\|\mu_t-\psi_t^{\mathfrak h}\|_\infty\le E_t^{\mathfrak h}, \quad t\in\mathbb N_N,
\end{equation}
where the error $E_t^{\mathfrak h}$ satisfies the difference equation
\begin{equation*}
E_{t+1}^{\mathfrak h} = M_{\mathfrak f}^{\mathfrak h} E_t^{\mathfrak h} + \mathcal E^{\mathfrak h},\quad E_0^{\mathfrak h} = 0. 
\end{equation*}
\end{thm}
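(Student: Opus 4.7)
The plan is to prove the bound by induction on $t$, exploiting the linearity of $\mathcal R_\Upsilon$ and $\Pi_\Upsilon$ to isolate a single-step projection residual from an error term that propagates multiplicatively. The base case $t=0$ is immediate, since $\psi_0^{\mathfrak h} = (\Pi_\Upsilon\mathcal R_\Upsilon)^0(\mu_0) = \mu_0$, which matches $E_0^{\mathfrak h}=0$. For the inductive step I would write the one-step decomposition
$\mu_{t+1} - \psi_{t+1}^{\mathfrak h} = \mathcal R_\Upsilon(\mu_t) - \Pi_\Upsilon\mathcal R_\Upsilon(\psi_t^{\mathfrak h}) = \bigl[\mathcal R_\Upsilon - \Pi_\Upsilon\mathcal R_\Upsilon\bigr](\mu_t) + \Pi_\Upsilon\mathcal R_\Upsilon\bigl(\mu_t-\psi_t^{\mathfrak h}\bigr)$,
so that the two summands can be controlled separately via \eqref{eq:Gener_opr} and \eqref{eq:constant_M} respectively.

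Next, I would commute $\Pi_\Upsilon$ with the integral appearing in $\mathcal R_\Upsilon$, writing $\Pi_\Upsilon\mathcal R_\Upsilon(f)(\bar s) = \int_\Upsilon \Pi_\Upsilon(t_{\mathfrak s}(\cdot|s))(\bar s)\,f(s)\,ds$ by linearity. The residual term then becomes $\int_\Upsilon [t_{\mathfrak s}(\bar s|s) - \Pi_\Upsilon(t_{\mathfrak s}(\cdot|s))(\bar s)]\,\mu_t(s)\,ds$; assumption \eqref{eq:Gener_opr} bounds the bracket uniformly by $\mathcal E^{\mathfrak h}$, and a short side observation shows $\mu_t\ge 0$ with $\int_\Upsilon \mu_t(s)\,ds \le 1$ for every $t$. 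The latter follows by induction on $t$ from $\int_\Upsilon \mu_0 \le \int_{\mathcal S}\pi_0=1$ and from the sub-stochastic character of the truncated recursion \eqref{eq:rec_trunc}. Hence the residual contributes at most $\mathcal E^{\mathfrak h}$ in $\|\cdot\|_\infty$.

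For the propagated term I would use \eqref{eq:constant_M} directly: for any $g\in\mathbb B(\Upsilon)$,
$|\Pi_\Upsilon\mathcal R_\Upsilon(g)(\bar s)| \le \|g\|_\infty \int_\Upsilon \bigl|\Pi_\Upsilon(t_{\mathfrak s}(\cdot|s))(\bar s)\bigr|\,ds \le M_{\mathfrak f}^{\mathfrak h}\,\|g\|_\infty$,
so that applied with $g = \mu_t - \psi_t^{\mathfrak h}$ the inductive hypothesis gives a contribution of at most $M_{\mathfrak f}^{\mathfrak h} E_t^{\mathfrak h}$. Adding the two bounds via the triangle inequality yields $\|\mu_{t+1}-\psi_{t+1}^{\mathfrak h}\|_\infty \le \mathcal E^{\mathfrak h} + M_{\mathfrak f}^{\mathfrak h} E_t^{\mathfrak h} = E_{t+1}^{\mathfrak h}$, closing the induction.

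The one genuinely non-routine point is justifying the exchange $\Pi_\Upsilon\bigl(\int_\Upsilon t_{\mathfrak s}(\cdot|s)\,f(s)\,ds\bigr) = \int_\Upsilon \Pi_\Upsilon(t_{\mathfrak s}(\cdot|s))\,f(s)\,ds$, which turns an abstract linearity statement into an integral identity of Fubini type. For the interpolation-based projections actually used in Section~\ref{sec:proj}, this is transparent: $\Pi_\Upsilon$ is a finite linear combination of evaluation (or inner-product) functionals, each of which trivially commutes with integration in $s$. For a fully general linear projection one would need a mild continuity or boundedness hypothesis to legitimise the swap; I would note this explicitly rather than grind through it, since the generality sufficient for the applications is clear.
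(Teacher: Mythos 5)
Your proof is correct and follows essentially the same route as the paper's: the identical one-step decomposition into a projection residual on $\mu_t$ plus a propagated term $\Pi_\Upsilon\mathcal R_\Upsilon(\mu_t-\psi_t^{\mathfrak h})$, bounded by $\mathcal E^{\mathfrak h}$ via \eqref{eq:Gener_opr} together with $\int_\Upsilon\mu_t\le 1$, and by $M_{\mathfrak f}^{\mathfrak h}E_t^{\mathfrak h}$ via \eqref{eq:constant_M}, respectively. Your explicit attention to justifying the interchange $\Pi_\Upsilon\mathcal R_\Upsilon(f)(\bar s)=\int_\Upsilon\Pi_\Upsilon(t_{\mathfrak s}(\cdot|s))(\bar s)f(s)\,ds$ is a point the paper passes over with the phrase ``linearity of the operator,'' and is a reasonable addition rather than a deviation.
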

\begin{cor}
\label{thm:dir_error}
Under the assumptions raised in \eqref{eq:Gener_opr}-\eqref{eq:constant_M}, 
the error $E_t^{\mathfrak h}$ can be alternatively expressed explicitly as $E_t^{\mathfrak h} = \mathcal E^{\mathfrak h}\kappa(t,M_{\mathfrak f}^{\mathfrak h})$.
\end{cor}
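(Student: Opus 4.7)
The statement reduces to solving the scalar first-order linear recurrence
\begin{equation*}
E_{t+1}^{\mathfrak h} = M_{\mathfrak f}^{\mathfrak h} E_t^{\mathfrak h} + \mathcal E^{\mathfrak h}, \qquad E_0^{\mathfrak h} = 0,
\end{equation*}
so my plan is to unroll it directly. The inhomogeneous recurrence with constant coefficients admits the closed-form expression
\begin{equation*}
E_t^{\mathfrak h} \;=\; \mathcal E^{\mathfrak h}\sum_{k=0}^{t-1}(M_{\mathfrak f}^{\mathfrak h})^k,
\end{equation*}
which I would either derive by induction on $t$ (the base case $t=0$ yields the empty sum equal to $0 = E_0^{\mathfrak h}$, and the inductive step just multiplies by $M_{\mathfrak f}^{\mathfrak h}$ and adds $\mathcal E^{\mathfrak h}$), or equivalently verify by substituting back into the recurrence.

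Next I would split on the value of $M_{\mathfrak f}^{\mathfrak h}$ to match the piecewise definition of $\kappa$ given in \eqref{eq:kappa}. When $M_{\mathfrak f}^{\mathfrak h} \ne 1$, the finite geometric series evaluates to $(1-(M_{\mathfrak f}^{\mathfrak h})^t)/(1-M_{\mathfrak f}^{\mathfrak h})$, and when $M_{\mathfrak f}^{\mathfrak h} = 1$ it equals $t$; in both cases the sum coincides precisely with $\kappa(t, M_{\mathfrak f}^{\mathfrak h})$, giving $E_t^{\mathfrak h} = \mathcal E^{\mathfrak h}\kappa(t, M_{\mathfrak f}^{\mathfrak h})$ as required.

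There is really no significant obstacle here: the only thing to watch is the index accounting in the induction (making sure the sum has $t$ terms rather than $t+1$ when $E_0^{\mathfrak h}=0$) and the case distinction to align with the definition of $\kappa$. Since Theorem~\ref{thm:error_dynamic} already supplies the recurrence for $E_t^{\mathfrak h}$, the corollary is purely an algebraic identity and the proof can be stated in a few lines.
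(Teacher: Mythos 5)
Your proof is correct and is exactly the standard argument: the paper itself omits a proof of this corollary, treating it as the immediate closed-form solution of the linear recurrence from Theorem~\ref{thm:error_dynamic}, which is precisely what you carry out. The unrolling to the geometric sum $\mathcal E^{\mathfrak h}\sum_{k=0}^{t-1}(M_{\mathfrak f}^{\mathfrak h})^k$ and the case split matching the definition of $\kappa$ in \eqref{eq:kappa} are both right.
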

The error $E_t^{\mathfrak h}$ formulated in Theorem \ref{thm:error_dynamic} is comparable with the quantity $E_t$ computed in Theorem \ref{thm:Error}. 
Both $E_t^{\mathfrak h},E_t$ represent bounds on the approximation error introduced by $\mu_t(\cdot)$, 
the density function obtained after state space truncation. 
The difference is in the initialisation of the corresponding recursions, 
where we have $E_0^{\mathfrak h} = 0$ because $\psi_0^{\mathfrak h} = \mu_0$, 
but $E_0 = \lambda_0\delta$ since $\psi_0$ is a piecewise constant approximation of $\mu_0$ in \eqref{eq:approx_marg}. 
As we mentioned before, the quantities in \eqref{eq:relax_cont} can be alternatively employed as the starting values of the computation to relax the continuity assumption on $\pi_0$, which results in an initial error $E_0 = 0$, thus providing a complete similarity between $E_t$ and $E_t^{\mathfrak h}$.

\subsection{Construction of the Projection Operator}

In the ensuing sections we focus, for the sake of simplicity, on a Euclidean domain,   
namely $\Upsilon\subset \mathcal S=\mathbb{R}^d$, 
where $d$ denotes a finite dimension.
We discuss a general form for an interpolation operator related to the discussed projection operation. 
Let $\phi_j:\mathcal D\subset\mathbb R^d\rightarrow\mathbb R, j\in\mathbb N_h,$ be independent functions 
defined over a generic set $\mathcal D$. 
The interpolation operator $\Pi_{\mathcal D}$ is defined
as a projection map onto the function space $\Psi = span\{\phi_1(s),\phi_2(s),\ldots,\phi_h(s)\}$, 
which projects any function $f:\mathcal D\rightarrow \mathbb R$
to a unique function $\Pi_{\mathcal D}(f) = \sum_{j=1}^{h}\alpha_j\phi_j$,
using a finite set of data $\{(s_j,f(s_j))| s_j\in\mathcal D, j\in\mathbb N_h\}$ and 
such that $\Pi_{\mathcal D}(f)(s_j) = f(s_j)$.  
The projection coefficients $\alpha_j,j\in\mathbb N_h,$ satisfy the linear equation $\mathbf f = \mathcal Q \boldsymbol{\alpha}$,
where $\mathbf f = [f(s_i)]_{i \in\mathbb N_h}$ and $\boldsymbol{\alpha} = [\alpha_j]_{j \in\mathbb N_h}$ are $h$--dimensional column vectors, 
and $\mathcal Q = [\phi_j(s_i)]_{i,j}$ is the associated $(h\times h)$--dimensional interpolation matrix.

\medskip

Let us now shift the focus to the recursion in \eqref{eq:recursive_RA2} discussed in the previous section and tailor the operators above accordingly. 
Let us select a partition $\{\mathcal A_i,\,i\in\mathbb N_n\}$ for the set $\Upsilon$, 
with finite cardinality $n$. 
Selecting a basis $\{\phi_{ij},\,j\in\mathbb N_h\}$ for each partition, 
let us introduce the interpolation operators $\Pi_{\mathcal A_i}$ 
for the projection over each partition set $\mathcal A_i$,  
which is done as described above by replacing the domain $\mathcal D$ with $\mathcal A_i$. 
Finally, let us introduce the (global) linear operator $\Pi_{\Upsilon}$, 
acting on a function $f:\Upsilon\rightarrow \mathbb{R}$ by 
\begin{equation}
\label{eq:operator_partition}
\Pi_{\Upsilon}(f) = \sum_{i=1}^n \mathds{1}_{\mathcal A_i}\Pi_{\mathcal A_i}(f|_{\mathcal A_i}), 
\end{equation}
where $f|_{\mathcal A_i}$ represents the restriction of the domain of function $f$ to the partition set $\mathcal A_i$.

\subsection{Approximation Algorithm}
An advantage of the interpolation operator in \eqref{eq:operator_partition} is that $\Pi_{\Upsilon}(f)$ 
is fully characterised by the interpolation coefficients $\alpha_{ij}$, since 
\begin{equation*}
\Pi_{\Upsilon}(f) = \sum_{i=1}^{n}\sum_{j=1}^{h}\alpha_{ij}\phi_{ij}\mathds{1}_{\mathcal A_i}.
\end{equation*} 
The set of interpolation coefficients $\alpha_{ij}$ is computable by matrix multiplication 
based on the data set $\{f(s_{ij}),\,i\in\mathbb N_n,j\in\mathbb N_h\}$.
More precisely, we have $[f(s_{iv})]_{v\in\mathbb N_h} = \mathcal Q_i [\alpha_{ij}]_{j\in\mathbb N_h}$ with the interpolation matrices $\mathcal Q_i = [\phi_{ij}(s_{iv})]_{v,j\in\mathbb N_h}$.
These matrices depend solely on the interpolation points $s_{ij}$ and on the basis functions $\phi_{ij}$ evaluated at these points 
and can be computed off-line (see step~\ref{alg:comp} in Algorithm~\ref{algo:app_val}, to be discussed shortly).
Moreover, values of the function $f$ only at the interpolation points $s_{ij}$ are sufficient for the computation of $\alpha_{ij}$. 

Let us now focus on the recursion in \eqref{eq:dis_val_fun}, 
namely $\psi_{t+1}^{\mathfrak h} = \Pi_{\Upsilon}\mathcal R_\Upsilon(\psi_t^{\mathfrak h})$, 
given the initialisation $\psi_0^{\mathfrak h} = \mu_0$, 
for the approximate computation of the value functions. 
This recursion indicates that the approximate functions $\psi_t^{\mathfrak h},\, t\in\mathbb N_N,$ belong to the image of the operator $\Pi_{\Upsilon}$, 
and as such can be expressed as 
\begin{equation*}
\psi^{\mathfrak h}_t = \sum_{i=1}^{n}\sum_{j=1}^{h}\alpha_{ij}^{t}\phi_{ij}\mathds{1}_{\mathcal A_i},
\end{equation*}
where $\alpha_{ij}^t$ denote the interpolation coefficients referring to function $\psi^{\mathfrak h}_t$ (at step $t$).  
This suggests that we need to store and update the coefficients $\alpha_{ij}^{t}$ for each iteration in \eqref{eq:dis_val_fun}.   
Writing the recursion in the form $\psi^{\mathfrak h}_{t+1} = \Pi_{\Upsilon}(\mathcal R_\Upsilon(\psi^{\mathfrak h}_t))$ indicates that the function $\psi^{\mathfrak h}_{t+1}$ is in the range of the projection $\Pi_{\Upsilon}$.
Therefore, it is sufficient to evaluate the function $\mathcal R_\Upsilon(\psi^{\mathfrak h}_t)$ over the interpolation points in order to compute the coefficients $\alpha_{ij}^{t+1}$.  
In the following expressions, 
the pair $i,u$ indicates the indices of related partition sets, 
namely $\mathcal A_i,\mathcal A_u$,
whereas the pair of indices $j,v$ show the ordering positions within partition sets. 
For an arbitrary interpolation point $s_{uv}$ we have:
\begin{align*}
\mathcal R_\Upsilon(\psi^{\mathfrak h}_t)(s_{uv})
= \int_{\Upsilon} t_{\mathfrak s}(s_{uv}|s)\psi^{\mathfrak h}_t(s)ds
 = \sum_{i=1}^{n}\sum_{j=1}^{h}\alpha_{ij}^{t}\int_{\mathcal A_i}t_{\mathfrak s}(s_{st}|s)\phi_{ij}(s)ds.
\end{align*}
Introducing the following quantities
\begin{equation*}
P_{ij}^{uv} = \int_{\mathcal A_i}t_{\mathfrak s}(s_{uv}|s)\phi_{ij}(s)ds,
\end{equation*}
we can succinctly express 
\begin{equation*} 
\mathcal R_\Upsilon(\psi^{\mathfrak h}_t)(s_{uv}) = 
\sum_{i=1}^{n}\sum_{j=1}^{h}\alpha_{ij}^{t}P_{ij}^{uv} \doteq \beta^{t+1}_{uv}.
\end{equation*}
Algorithm~\ref{algo:app_val} provides a general procedure for the discrete computation of the interpolation coefficients and of the approximate value functions.
\begin{algorithm}
\caption{Approximate computation of the functions $\psi^{\mathfrak h}_t$}
\label{algo:app_val}
\begin{center}
\begin{algorithmic}[1]
\REQUIRE 
Density function $t_{\mathfrak s}(\bar s|s)$, 
set $\Upsilon$ 
\STATE
Select a finite $n$-dimensional partition of the set $\Upsilon = \cup_{i=1}^{n}\mathcal A_i$
($\mathcal A_i$ are non-overlapping)
\STATE
For each $\mathcal A_i$, select interpolation basis functions $\phi_{ij}$
and points $s_{ij}\in \mathcal A_i$, where $j\in\mathbb N_h$
\STATE \label{alg:marginals}
Compute $P_{ij}^{uv} = \int_{\mathcal A_i}t_{\mathfrak s}(s_{uv}|s)\phi_{ij}(s)ds$, where $i,u\in\mathbb N_n$ and $j,v\in\mathbb N_h$
\STATE \label{alg:comp}
Compute a matrix representation for the operators $\Pi_{\mathcal A_i}$, namely $\mathcal Q_i = [\phi_{ij}(s_{iv})]_{v,j}$
\STATE \label{alg:init_marginals}
 Set $t=1$ and $\beta^1_{ij} = \int_{\Upsilon}t_{\mathfrak s}(s_{ij}|s)\mu_0(s)ds$, for all $i,j$
\IF{$t\le N$}
\STATE \label{alg:comp_1} 
Compute interpolation coefficients $\alpha_{ij}^t$ based on equation
$\mathcal Q_i [\alpha_{ij}^t]_{j\in\mathbb N_h} = [\beta^t_{iv}]_{v\in\mathbb N_h}$,
given $\beta^t_{ij}$ and matrices $\mathcal Q_i$ in step~\ref{alg:comp}
\STATE \label{alg:comp_2}
Compute values $\beta^{t+1}_{uv}$ as $\beta^{t+1}_{uv} = \sum_i\sum_j\alpha_{ij}^{t}P_{ij}^{uv}$, for all $u,v$
\STATE
$t = t+1$
\ENDIF
\ENSURE
Approximate functions $\psi^{\mathfrak h}_t = \sum_i\sum_j\alpha_{ij}^{t}\phi_{ij}\mathds{1}_{\mathcal A_i},\, t\in\mathbb N_N$
\end{algorithmic}
\end{center}
\end{algorithm}

It is possible to simplify Algorithm~\ref{algo:app_val} when the interpolation matrices $\mathcal Q_i$ are nonsingular.
Let us transform the basis $\{\phi_{ij},\,j\in\mathbb N_h\}$ to its equivalent basis using matrix $\left(\mathcal Q_i^T\right)^{-1}$. The interpolation matrices corresponding to the new basis will be the identity matrix. In other words, the new basis functions admit $\mathcal Q_i = \mathbb I_h$ and thus
step~\ref{alg:comp} can be skipped, 
and that the main update (steps~\ref{alg:comp_1} and \ref{alg:comp_2}) can be simplified as follows: 
\begin{equation*}
\alpha_{uv}^{t+1} = \sum_{i=1}^{n}\sum_{j=1}^{h}\alpha_{ij}^t P_{ij}^{uv}, 
\quad\forall u\in\mathbb N_n,v\in\mathbb N_h.
\end{equation*}

In Algorithm~\ref{algo:app_val}, 
the interpolation points $s_{ij}$ are in general pair-wise distinct. 
By extending the domain of interpolation $\mathcal A_i$ to its closure $\bar{\mathcal A}_i$, 
it is legitimate to use boundary points as interpolation points, 
which can lead to a reduction of the number of integrations required in Algorithm~\ref{algo:app_val}.
In the ensuing sections, we will exploit this feature by specifically selecting equally spaced interpolation points.

\section{Special Forms of the Projection Operator} 
\label{sec:proj}

In this section we leverage known interpolation theorems for the construction of the projection operator $\Pi_{\Upsilon}$: 
this should both yield useful schemes for a number of standard models, 
and further help with the understanding of the details discussed in the previous section.

\subsection{Piecewise Constant Approximations}
\label{subsec:PWC}

We focus on the special case of the approximation of a function by a piecewise constant one, 
which has inspired Section~\ref{sec:partition}.
Let us select the basis functions $\phi_{ij}(s) = 1$ for all $i\in\mathbb N_n,j\in\mathbb N_1$ -- 
the cardinality of these sets of basis functions is simply equal to $h=1$  
(we eliminate the corresponding indices when appropriate).  
In this case the matrix operators $\mathcal Q_i,\,i\in\mathbb N_n,$ (\cf step~\ref{alg:comp} in Algorithm~\ref{algo:app_val}) correspond to the identity matrix, 
and the projection operator $\Pi_{\Upsilon}$ becomes 
\begin{equation}
\label{eq:proj_piecewise_constant}
\Pi_{\Upsilon}(f) = \sum_{i=1}^n f(s_i)\mathds{1}_{\mathcal A_i},\quad \forall f\in\mathbb B(\Upsilon),
\end{equation}
where the quantities $P_{ij}^{uv}$ (\cf step~\ref{alg:marginals} in Algorithm~\ref{algo:app_val}) form a square matrix 
(see step~\ref{alg:marginals_1} in Algorithm~\ref{algo:app_val_abs}).   
The procedure is detailed in Algorithm~\ref{algo:app_val_abs},
while the associated error is formally quantified in Theorem~\ref{thm:zero_abs}.
\begin{algorithm}
\caption{Piecewise constant computation of the functions $\psi^{\mathfrak h}_{t}$}  
\label{algo:app_val_abs}
\begin{center}
\begin{algorithmic}[1]
\REQUIRE 
Density function $t_{\mathfrak s}(\bar s|s)$, set $\Upsilon$
\STATE
Select a finite $n$-dimensional partition of the set $\Upsilon = \cup_{i=1}^{n}\mathcal A_i$
($\mathcal A_i$ are non-overlapping)
\STATE
For each $\mathcal A_i$, select one representative point $s_i\in \mathcal A_i$
\STATE \label{alg:marginals_1}
Compute matrix $P = [P(i,j)]_{i,j}$ with entries $P(i,j) = \int_{\mathcal A_i}t_{\mathfrak s}(s_j|s)ds$, where $i,j\in\mathbb N_n$
\STATE Set $t=1$ and $\alpha_1(i) = \int_{\Upsilon}t_{\mathfrak s}(s_i|s)\mu_0(s)ds$, for all $i$
\IF{$t < N$}
\STATE
Compute the row vector $\boldsymbol{\alpha_{t+1}} = [\alpha_{t+1}(i)]_i$ based on $\boldsymbol{\alpha_{t+1}} = \boldsymbol{\alpha_t}P$
\STATE
$t = t+1$
\ENDIF
\ENSURE
Approximate functions $\psi^{\mathfrak h}_t = \sum_{i=1}^{n}\alpha_t(i)\mathds{1}_{\mathcal A_i},\, t\in\mathbb N_N$
\end{algorithmic}
\end{center}
\end{algorithm}

\begin{thm}
\label{thm:zero_abs} 
Suppose the density function $t_{\mathfrak s}(\cdot|s)$ satisfies the Lipschitz continuity assumption \eqref{eq:cond_kern} with constant $\lambda_{\mathfrak f}$. Then the projection operator \eqref{eq:proj_piecewise_constant} satisfies the inequality 
\begin{equation*}
\|\Pi_{\Upsilon}\left(t_{\mathfrak s}(\cdot|s)\right)-t_{\mathfrak s}(\cdot|s)\|_{\infty} \le \lambda_{\mathfrak f}\delta, \quad \forall s\in\Upsilon,
\end{equation*}
where $\delta = \max_i \delta_i$ is the partition diameter of $\cup_{i=1}^{n}\mathcal A_i = \Upsilon$,
with $\delta_i =\sup\{\|s-s'\|: s,s'\in\mathcal A_i\}$.
Theorem~\ref{thm:error_dynamic} ensures that the approximation error of Algorithm~\ref{algo:app_val_abs} is upper bounded by the quantity
\begin{equation*}
\|\mu_t-\psi_t^{\mathfrak h}\|_\infty\le E_t^{\mathfrak h} = \lambda_{\mathfrak f}\delta\kappa(t,M_{\mathfrak f}), \quad t\in\mathbb N_N,
\end{equation*}
with the constant $M_{\mathfrak f}$ defined in Assumption~\ref{ass:Lip_cont_bounded}. 
\end{thm}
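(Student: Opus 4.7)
The proof splits into two independent pieces: establishing the pointwise projection bound $\mathcal E^{\mathfrak h} = \lambda_{\mathfrak f}\delta$, and then verifying that the constant $M_{\mathfrak f}^{\mathfrak h}$ appearing in Theorem~\ref{thm:error_dynamic} can be taken to equal $M_{\mathfrak f}$, so that the conclusion follows by direct substitution into Corollary~\ref{thm:dir_error}.

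For the first piece, I would fix $s\in\Upsilon$ and treat $t_{\mathfrak s}(\cdot|s)$ as a function of its first argument. By the explicit formula \eqref{eq:proj_piecewise_constant}, for any $\bar s\in\Upsilon$ there is a unique $i\in\mathbb N_n$ with $\bar s\in\mathcal A_i$, and then $\Pi_{\Upsilon}(t_{\mathfrak s}(\cdot|s))(\bar s) = t_{\mathfrak s}(s_i|s)$. Applying the Lipschitz assumption \eqref{eq:cond_kern} gives
\begin{equation*}
|\Pi_{\Upsilon}(t_{\mathfrak s}(\cdot|s))(\bar s) - t_{\mathfrak s}(\bar s|s)| = |t_{\mathfrak s}(s_i|s) - t_{\mathfrak s}(\bar s|s)| \le \lambda_{\mathfrak f}\|s_i-\bar s\| \le \lambda_{\mathfrak f}\delta_i \le \lambda_{\mathfrak f}\delta.
\end{equation*}
Taking the supremum over $\bar s\in\Upsilon$ yields $\mathcal E^{\mathfrak h} = \lambda_{\mathfrak f}\delta$ as required in \eqref{eq:Gener_opr}.

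For the second piece, I need to check the hypothesis \eqref{eq:constant_M} with $M_{\mathfrak f}^{\mathfrak h} = M_{\mathfrak f}$. For any fixed $\bar s\in\Upsilon$, say $\bar s\in\mathcal A_i$, the projection \eqref{eq:proj_piecewise_constant} (understood as a function of the first argument of $t_{\mathfrak s}$, with $s$ as the integration variable) gives $\Pi_{\Upsilon}(t_{\mathfrak s}(\cdot|s))(\bar s)=t_{\mathfrak s}(s_i|s)$, so
\begin{equation*}
\int_{\Upsilon}\bigl|\Pi_{\Upsilon}(t_{\mathfrak s}(\cdot|s))(\bar s)\bigr|\,ds = \int_{\Upsilon} t_{\mathfrak s}(s_i|s)\,ds \le \int_{\mathcal S} t_{\mathfrak s}(s_i|s)\,ds \le M_{\mathfrak f},
\end{equation*}
using positivity of $t_{\mathfrak s}$ and the definition of $M_{\mathfrak f}$ in Assumption~\ref{ass:Lip_cont_bounded}. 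Thus \eqref{eq:constant_M} holds with $M_{\mathfrak f}^{\mathfrak h}=M_{\mathfrak f}$.

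With both hypotheses of Theorem~\ref{thm:error_dynamic} verified, Corollary~\ref{thm:dir_error} immediately gives $E_t^{\mathfrak h} = \mathcal E^{\mathfrak h}\kappa(t,M_{\mathfrak f}^{\mathfrak h}) = \lambda_{\mathfrak f}\delta\,\kappa(t,M_{\mathfrak f})$, which is the stated bound. I do not foresee any real obstacle: the only subtlety is the dual role played by the arguments of $t_{\mathfrak s}$ in \eqref{eq:Gener_opr} versus \eqref{eq:constant_M} (in one case the projection is in the $\bar s$ variable for fixed $s$, and one bounds the sup over $\bar s$; in the other one integrates the sampled kernel over the variable $s$). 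Being careful to interpret $\Pi_{\Upsilon}$ as acting on the appropriate argument in each instance is the main point to get right, but once sorted, both bounds fall out of the Lipschitz condition and of the definition of $M_{\mathfrak f}$ with no additional estimation needed.
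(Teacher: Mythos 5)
Your proposal is correct and matches the paper's (largely implicit) argument: the paper gives no separate proof of this theorem, presenting it as a direct instance of Theorem~\ref{thm:error_dynamic} once one checks \eqref{eq:Gener_opr} via the Lipschitz condition \eqref{eq:cond_kern} with $\mathcal E^{\mathfrak h}=\lambda_{\mathfrak f}\delta$ and \eqref{eq:constant_M} with $M_{\mathfrak f}^{\mathfrak h}=M_{\mathfrak f}$, which is precisely what you do. Your remark about the dual role of the two arguments of $t_{\mathfrak s}$ is the one genuine subtlety, and you handle it correctly.
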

Notice that the error $E_t^{\mathfrak h}$ of Theorem~\ref{thm:zero_abs} reduces to $E_t$ in Theorem~\ref{thm:Error} when employing the quantities in \eqref{eq:relax_cont} to relax the continuity assumption on $\pi_0$. 

Let us compare Algorithms~\ref{algo:app_val} and \ref{algo:app_val_abs} in terms of their computational complexity. 
Algorithm~\ref{algo:app_val} requires $nh(nh+1)$ integrations in the marginalisation steps (\ref{alg:marginals} and \ref{alg:init_marginals}),
whereas $n(n+1)$ integrations are required in Algorithm~\ref{algo:app_val_abs}.  
Furthermore, steps~\ref{alg:comp} and \ref{alg:comp_1} in Algorithm~\ref{algo:app_val} can be skipped by using proper equivalent basis functions, 
whereas these steps are not needed at all in Algorithm~\ref{algo:app_val_abs}. 
As a bottom line, 
higher interpolation orders increase the computational complexity of the approximation procedure, 
however this can as well lead to a lower global approximation error.   
From a different perspective, since the global approximation error depends on the local partitioning sets (their diameter and the local continuity of the density function), 
for a given error higher interpolation procedures may require less partitions sets.

As a final note, comparing the transition probabilities of \eqref{eq:trans_elem} with quantities $P(i,j)$ in step~\ref{alg:marginals_1} of Algorithm~\ref{algo:app_val_abs} reveals that the Markov chain abstraction presented in Section~\ref{sec:partition} is a special case of Algorithm~\ref{algo:app_val_abs}.
More precisely, the \emph{mean value theorem} for integration ensures the existence of representative points $s_i$ such that  $P(i,j)$ of Algorithm~\ref{algo:app_val_abs} is equal to $P_{ij}$ in \eqref{eq:trans_elem}.

\subsection{Higher-order Approximations for One-Dimensional Systems}

We study higher-order interpolations over the real axis, 
where the partition sets $\mathcal A_i$ are real-valued intervals. 
We use this simple setting to quantify the error related to the approximate computation of the functions $\mu_t$.
We select equally spaced points as the interpolation points and employ polynomial basis functions within each interval.

Consider a one dimensional Markov process, $\mathcal S = \mathbb R$, with a partitioning of $\Upsilon = \cup_{i=1}^{n}\mathcal A_i$ which is such that $\mathcal A_i = [a_i,b_i]$. 
Define the interpolation operator $\Pi_\Upsilon$ of \eqref{eq:operator_partition} over the polynomial basis functions $\phi_{ij}(s) = s^{j-1}$, $i\in\mathbb N_n,\,j\in\mathbb N_h,h\ge 2,$ (or their equivalent Lagrange polynomials \cite{Mastroianni:2008:IPB:1502750})
using equally spaced interpolation points $s_{ij}\in \mathcal A_i,$
\begin{equation*}
s_{ij} = a_i+(j-1)\frac{b_i-a_i}{h-1},\quad j\in\mathbb N_h. 
\end{equation*}
The following result can be adapted from \cite{Mastroianni:2008:IPB:1502750}. 
\begin{thm}
\label{thm:high_1d}
Assume that the density function $t_{\mathfrak s}(\cdot|s)$ is $h$-times differentiable and define the constant 
\begin{align*}
&\mathcal M_h = \max_{s,\bar s\in\Upsilon} \left|\frac{\partial^h t_{\mathfrak s}(\bar s|s)}{\partial \bar s^h}\right|.
\end{align*}
The interpolation operator $\Pi_\Upsilon$, constructed with polynomial basis functions and equally spaced interpolation points, satisfies the inequality
\begin{equation*}
\|\Pi_{\Upsilon}\left(t_{\mathfrak s}(\cdot|s)\right)-t_{\mathfrak s}(\cdot|s)\|_{\infty} \le \mathcal E^{\mathfrak h} = \frac{\mathcal M_h}{4h}\left(\frac{\delta}{h-1}\right)^h,\quad \forall s\in\Upsilon,
\end{equation*}
where $\delta = \max_i \delta_i$, 
with $\delta_i = b_i-a_i,\,i\in\mathbb N_n,$
and where $h$ is the cardinality of the set of basis functions.
\end{thm}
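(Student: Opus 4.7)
The plan is to reduce the global estimate on $\Upsilon$ to a pointwise estimate on each partition interval $\mathcal A_i = [a_i, b_i]$, and then invoke the classical Cauchy remainder for Lagrange interpolation together with a known sharp estimate for the nodal polynomial on equally spaced nodes. The first step uses the construction of $\Pi_\Upsilon$ in \eqref{eq:operator_partition}: since the sets $\mathcal A_i$ partition $\Upsilon$, for any $\bar s \in \mathcal A_i$ one has
\begin{equation*}
\Pi_{\Upsilon}(t_{\mathfrak s}(\cdot|s))(\bar s) = \Pi_{\mathcal A_i}(t_{\mathfrak s}(\cdot|s)|_{\mathcal A_i})(\bar s),
\end{equation*}
and $\Pi_{\mathcal A_i}(t_{\mathfrak s}(\cdot|s)|_{\mathcal A_i})$ is, by construction, the unique polynomial in $\bar s$ of degree at most $h-1$ that agrees with $t_{\mathfrak s}(\cdot|s)$ at the $h$ equally spaced nodes $\{s_{ij}\}_{j=1}^{h}$. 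Thus the error on $\mathcal A_i$ coincides with the standard Lagrange interpolation error, so taking $\sup_{\bar s\in\Upsilon}$ reduces to taking $\max_i\sup_{\bar s\in\mathcal A_i}$ of this local error.

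Next, I would apply the Cauchy form of the remainder. Since $t_{\mathfrak s}(\cdot|s)$ is assumed $h$-times differentiable in its first argument, for each $\bar s\in\mathcal A_i$ there exists some $\xi = \xi(\bar s,s)\in\mathcal A_i$ such that
\begin{equation*}
t_{\mathfrak s}(\bar s|s) - \Pi_{\mathcal A_i}(t_{\mathfrak s}(\cdot|s))(\bar s) = \frac{1}{h!}\,\frac{\partial^h t_{\mathfrak s}}{\partial \bar s^h}(\xi|s)\,\prod_{j=1}^{h}(\bar s - s_{ij}).
\end{equation*}
The definition of $\mathcal M_h$ immediately controls the derivative factor by $\mathcal M_h$, so the only remaining task is to bound the nodal product uniformly in $\bar s\in\mathcal A_i$.

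This nodal estimate is the main technical obstacle, and it is the one step I would not try to redo from scratch: it is the classical bound that for $h$ equally spaced nodes on an interval of length $\delta_i$, with spacing $\eta_i = \delta_i/(h-1)$,
\begin{equation*}
\max_{\bar s\in\mathcal A_i}\prod_{j=1}^{h}|\bar s - s_{ij}| \le \frac{(h-1)!}{4}\,\eta_i^{h},
\end{equation*}
which follows by changing variables to $\bar s = s_{ij} + t\eta_i$ on each sub-interval $[s_{ij},s_{i,j+1}]$, observing that the factors $t(1-t)$ and the remaining integer-shifted terms are maximised in an outermost sub-interval; this is exactly the estimate cited in \cite{Mastroianni:2008:IPB:1502750}. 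Combining the three ingredients gives, for every $\bar s\in\mathcal A_i$,
\begin{equation*}
\bigl|t_{\mathfrak s}(\bar s|s) - \Pi_{\mathcal A_i}(t_{\mathfrak s}(\cdot|s))(\bar s)\bigr| \le \frac{\mathcal M_h}{h!}\cdot\frac{(h-1)!}{4}\,\eta_i^{h} = \frac{\mathcal M_h}{4h}\left(\frac{\delta_i}{h-1}\right)^{h}.
\end{equation*}
Taking the supremum over $i\in\mathbb N_n$ and using $\delta_i\le\delta$ together with monotonicity of $x\mapsto x^h$ for $x\ge 0$ yields the stated global bound $\mathcal E^{\mathfrak h} = \mathcal M_h(\delta/(h-1))^h/(4h)$, uniformly in $s\in\Upsilon$. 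Everything besides the nodal product estimate is a direct bookkeeping step that follows from the definition of $\Pi_\Upsilon$ and the hypotheses of the theorem.
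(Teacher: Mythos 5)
Your proposal is correct and follows exactly the route the paper intends: the paper gives no proof of this theorem in its appendix, only the remark that the result ``can be adapted from'' the interpolation monograph it cites, and your argument (local reduction to Lagrange interpolation on each $\mathcal A_i$, the Cauchy remainder with the $h$-th derivative bounded by $\mathcal M_h$, and the classical nodal-polynomial bound $\frac{(h-1)!}{4}\eta_i^h$ for $h$ equally spaced nodes) is precisely the standard derivation behind that citation. The constants combine as you state, $\frac{\mathcal M_h}{h!}\cdot\frac{(h-1)!}{4}\eta_i^h=\frac{\mathcal M_h}{4h}\bigl(\delta_i/(h-1)\bigr)^h$, so nothing is missing.
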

Theorem~\ref{thm:high_1d} provides the necessary ingredients for Theorem~\ref{thm:error_dynamic}, 
leading to the quantification of the approximation error: 
employing Algorithm~\ref{algo:app_val} with equally spaced points and
polynomial basis functions of degree less than $h$,
the approximation error is upper bounded by the quantity
\begin{equation*}
\|\mu_t-\psi_t^{\mathfrak h}\|_\infty\le E_t^{\mathfrak h} = \mathcal E^{\mathfrak h}\kappa(t,M_{\mathfrak f}^{\mathfrak h}),\quad t\in\mathbb N_N,
\end{equation*}
with the constant $M_{\mathfrak f}^{\mathfrak h}$ defined in \eqref{eq:constant_M} and computed for this particular choice of basis functions and points.

It is worth highlighting that, 
unlike the piecewise constant case of Section~\ref{sec:partition}, 
with higher-order approximation approaches the global error is a nonlinear function of the partition size $\delta$, 
namely it depends on a power of the partition size contingent on the order of the selected interpolation operator. 
As such, its convergence speed, as $\delta$ is decreased, increases over that of the piecewise constant case. 

\medskip

\noindent\textbf{Example \ref{ex:linear_1d} (Continued).} 
We partition the set $\Upsilon = \cup_{i=1}^{n}\mathcal A_i$ for the one dimensional system of Example \ref{ex:linear_1d} with the intervals $\mathcal A_i = [a_i,b_i]$.
We select interpolation points $\{a_i,a_{i+1}\}$ with polynomial basis functions $\{1,s\}$, 
leading to piecewise affine approximations (namely, first-order interpolation with $h = 2$) of the density functions $\pi_t(\cdot)$. 
This set of basis functions can be equivalently transformed to 
\begin{equation*}
\Phi = \left\{\phi_{i1} = \frac{b_i-s}{b_i-a_i},\,\phi_{i2} = \frac{s-a_i}{b_i-a_i}\right\},
\end{equation*}
to obtain $\mathcal Q_i = \mathbb I_2$. 
The constant $M_{\mathfrak f}^{\mathfrak h}$ has the same value as $M_{\mathfrak f}=1/a$ and the quantity $\mathcal M_2$ in Theorem \ref{thm:high_1d} is $\mathcal M_2 = 1/\sigma^3\sqrt{2\pi}$. The error related to this first-order approximation can be upper bounded as
\begin{equation}
\label{eq:error_example_2}
\|\pi_t-\psi_t^{\mathfrak h}\|_\infty\le
\kappa(t,M_{\mathfrak f})\left[\frac{\delta^2}{\sigma^38\sqrt{2\pi}}+\frac{\phi_1(\alpha)}{\sigma}\right].
\end{equation}
Notice that the first part of the error in \eqref{eq:error_example_2}, 
which specifically relates to the first-order approximation,  
is proportional to $\delta^2$ -- this improves the error bound computed in \eqref{eq:error_example_1}.
Algorithm \ref{algo:app_val} is implemented for this linear system with the aforementioned parameters
$b=0,\beta_0 = 0,\gamma_0 = 1,\sigma = 0.1,$ and the time horizon $N=5$. 
The errors corresponding to the values $a = 1.2$ and $a = 0.8$ are analytically upper-bounded, 
for any $t\in\mathbb N_N$, as 
$\|\pi_t-\psi_t^{\mathfrak h}\|_\infty\le 179\delta^2+35.9\phi_1(\alpha)$
and as $\|\pi_t-\psi_t^{\mathfrak h}\|_\infty\le 409.3\delta^2+82.1\phi_1(\alpha)$,
respectively.  
The plots in Figure \ref{fig:higher_order_density} display the first- and zero-order approximations of the density function $\pi_N(\cdot)$ and compare it with the analytic solution for two different values $a = 1.2$ (left) and $a = 0.8$ (right). 
The partition size $n = 25$ and parameter $\alpha = 2.4$ have been selected in order to illustrate the differences of the two approximation methods in Figure \ref{fig:higher_order_density},
but may be increased at will in order to decrease the related error bound to match a desired value. 

\begin{figure}
\centering
\includegraphics[scale = 0.5]{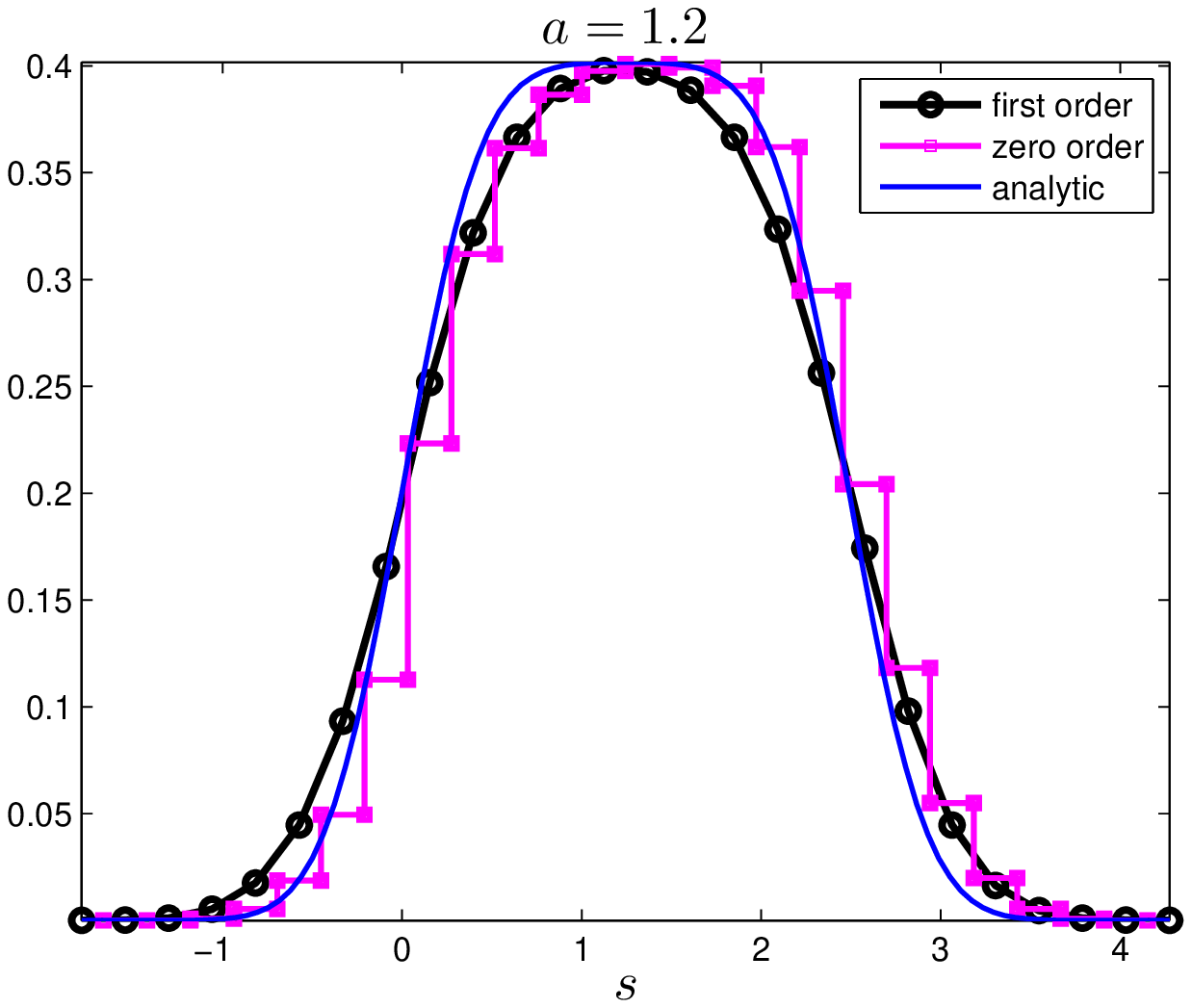}
\includegraphics[scale = 0.5]{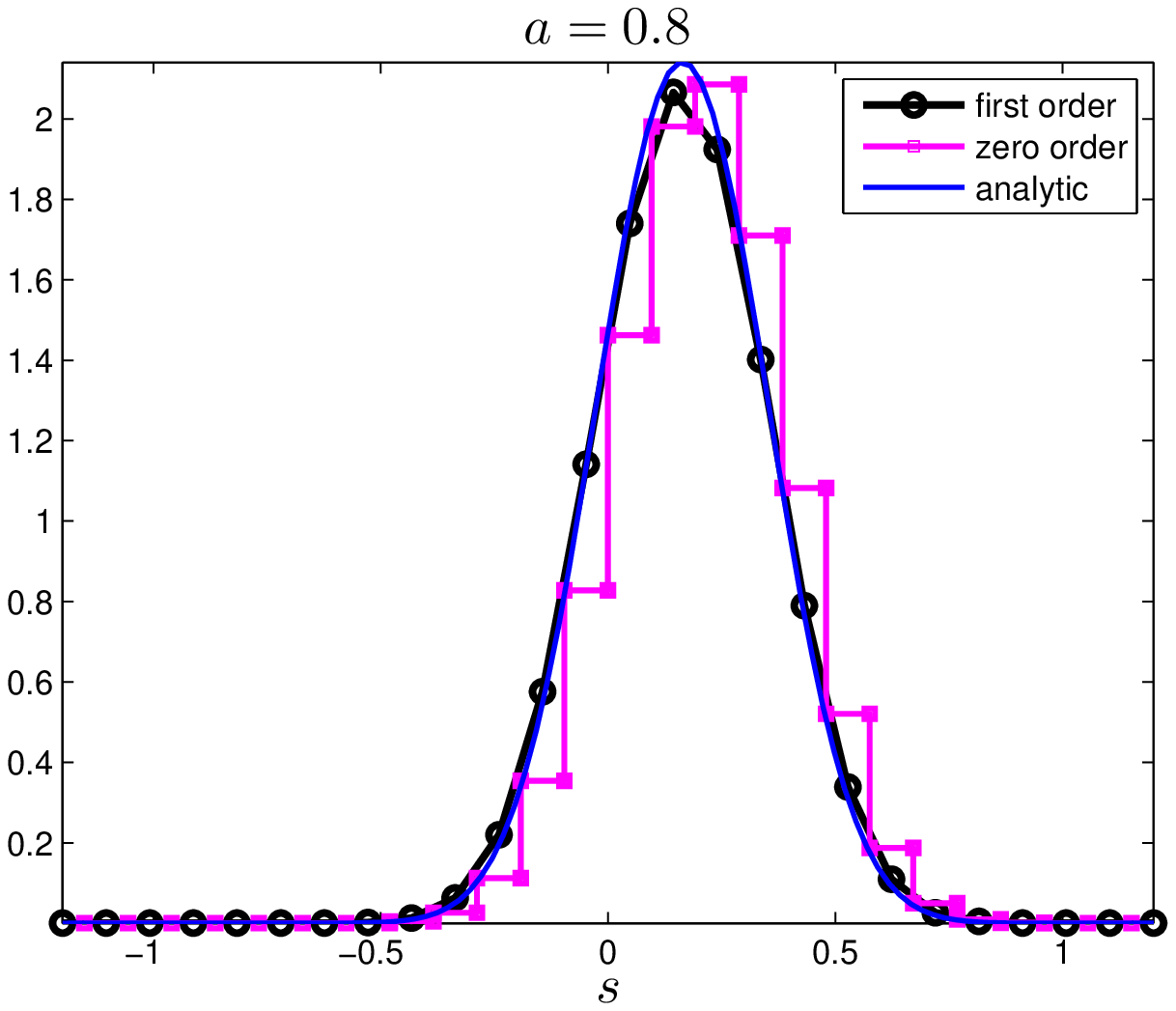}
\caption{Comparison of the first-order (affine) approximation $\psi_N^{\mathfrak h}(\cdot)$ 
versus the Markov chain abstraction (zero-order approximation, constant) $\psi_N(\cdot)$ of the state density function $\pi_N(\cdot)$ (derived analytically), 
for $N = 5$ and parameters $a=1.2$ (left) and $a=0.8$ (right).} 
\label{fig:higher_order_density}
\end{figure}

\subsection{Bilinear Interpolation for Two-Dimensional Systems}

We directly tailor the results of Section \ref{sec:approx&error} to a general two-dimensional Markov process, 
where $s = (s_1,s_2)\in\mathcal S = \mathbb R^2$. 
Assume that set $\Upsilon$ is replaced by a superset that is comprised of a finite union of rectangles: 
this replacement does not violate the bound on the truncation error formulated in Section \ref{sec:trunc}.   
Consider a uniform partition (using squared partition sets of size $\delta$) for the set $\Upsilon$.  
We employ a bilinear interpolation within each partition set
$\mathcal A_i = [a_{i1},b_{i1}]\times[a_{i2},b_{i2}]$
with basis
$\{\phi_{i1}(s) = 1,\phi_{i2}(s) = s_1,\phi_{i3}(s) = s_2,\phi_{i4}(s) = s_1 s_2\},i\in\mathbb N_n$
(or their equivalent Lagrange polynomials \cite{SAH12}).
Assume that the density function $t_{\mathfrak s}(\cdot|s)$ is partially differentiable and define the following bounds on its derivatives
\begin{align*}
\left|\frac{\partial^2 t_{\mathfrak s}(\bar s|s)}{\partial \bar s_k^2}\right|\le\mathcal M_2^k,\quad 
\left|\frac{\partial^3 t_{\mathfrak s}(\bar s|s)}{\partial \bar s_k^2\bar s_{3-k}}\right|\le\mathcal M_3^k, \quad
k\in\mathbb N_2,\, \forall s,\bar s\in\Upsilon.
\end{align*}
The operator $\Pi_\Upsilon$ in \eqref{eq:operator_partition}, 
constructed with bilinear interpolation within each partition set, 
satisfies the inequality
\begin{equation*}
\|\Pi_{\Upsilon}\left(t_{\mathfrak s}(\cdot|s)\right)-t_{\mathfrak s}(\cdot|s)\|_{\infty} \le \mathcal E^{\mathfrak h} = \frac{\delta^2}{16}\left(\mathcal M_2^1+\mathcal M_2^2\right)
+ \frac{\delta^3}{8\sqrt{2}}\left(\mathcal M_3^1+\mathcal M_3^2\right), 
\quad \forall s\in\Upsilon,
\end{equation*}
where $\delta = \max_i \delta_i$, 
with $\delta_i = \left[(b_{i1}-a_{i1})^2+(b_{i2}-a_{i2})^2\right]^{1/2},\,i\in\mathbb N_n.$
We implement Algorithm~\ref{algo:app_val} for two-dimensional processes using bilinear interpolation: 
the approximation error is upper-bounded by the quantity
\begin{equation*}
\|\mu_t-\psi_t^{\mathfrak h}\|_\infty\le E_t^{\mathfrak h} = \mathcal E^{\mathfrak h}\kappa(t,M_{\mathfrak f}^{\mathfrak h}), \quad t\in\mathbb N_N,
\end{equation*}
with the constant $M_{\mathfrak f}^{\mathfrak h}$ defined in \eqref{eq:constant_M} and computed for this particular choice of basis functions and points.
It can be proved that for bilinear interpolation basis functions,
the constant $M_{\mathfrak f}^{\mathfrak h}$ of \eqref{eq:constant_M} is upper bounded by $M_{\mathfrak f}$ of Assumption \ref{ass:Lip_cont_bounded}, and thus can be replaced by this quantity in the error computation.

\subsection{Trilinear Interpolation for Three-Dimensional Systems}

We now apply the results of Section \ref{sec:approx&error} to a general three-dimensional Markov process, 
where $s = (s_1,s_2,s_3)\in\mathcal S = \mathbb R^3.$ 
Again we replace the set $\Upsilon$ by a superset that is comprised of a finite union of boxes, 
without violating the bound on the truncation error formulated in Section \ref{sec:trunc}. 
Consider a uniform partition (using cubic sets of size $\delta$) for the set $\Upsilon$.  
We employ a trilinear interpolation within each partition set with basis functions
\begin{equation*}
\Phi = \{1,\,s_1,\,s_2,\,s_3,\,s_1s_2,\,s_2s_3,\,s_3s_1,\,s_1s_2s_3\}.
\end{equation*}
Assume that the density function $t_{\mathfrak s}(\cdot|s)$ is partially differentiable and define the following bounds on its derivatives
\begin{align*}
\left|\frac{\partial^2 t_{\mathfrak s}(\bar s|s)}{\partial \bar s_i^2}\right|\le\mathcal M^i_2,\quad
\left|\frac{\partial^3 t_{\mathfrak s}(\bar s|s)}{\partial \bar s_i^2\bar s_j}\right|\le\mathcal M^{ij}_3,\quad
\left|\frac{\partial^3 t_{\mathfrak s}(\bar s|s)}{\partial \bar s_1\bar s_2\bar s_3}\right|\le \mathcal M_3,\quad i,j\in\mathbb N_3,\,i\neq j.    
\end{align*}
We implement Algorithm~\ref{algo:app_val} for three-dimensional processes using trilinear interpolation in the operator $\Pi_\Upsilon$ \eqref{eq:operator_partition}.
The approximation error is then upper bounded by the quantity
\begin{equation*}
\|\mu_t-\psi_t^{\mathfrak h}\|_\infty\le \mathcal E^{\mathfrak h}\kappa(t,M_{\mathfrak f}^{\mathfrak h})\quad t\in\mathbb N_N,
\end{equation*}
with the constant
\begin{equation*}
\mathcal E^{\mathfrak h} = \frac{\delta^2}{24}\left(\mathcal M^1_2+\mathcal M^2_2+\mathcal M^3_2\right)
+\frac{\delta^3}{12\sqrt{3}}\left(\mathcal M^{12}_3+\mathcal M^{21}_3+\mathcal M^{23}_3+\mathcal M^{32}_3+\mathcal M^{13}_3+\mathcal M^{31}_3+3\mathcal M_3\right).
\end{equation*}
Similar to the bilinear interpolation case, the constant $M_{\mathfrak f}^{\mathfrak h}$
can be replaced by $M_{\mathfrak f}$ of Assumption \ref{ass:Lip_cont_bounded} in the error computation.

\section{Application of the Formal Approximation Procedure\\ to the Probabilistic Invariance Problem}
\label{sec:safety}
The problem of probabilistic invariance (or, equivalently, safety) for general Markov processes has been theoretically characterised in \cite{APLS08}  
and further investigated computationally in \cite{APKL10,SA11,SAH12,SA12}. 
With reference to a discrete-time Markov process $\mathscr M_{\mathfrak s}$ over a continuous state space $\mathcal S$, 
and to a safe set $\mathcal A \in \mathcal B(\mathcal S)$, 
the goal is to quantify the probability
\begin{equation*}
p_s^N(\mathcal A) = \mathbb P\{s(t)\in\mathcal A,\text{ for all } t\in\mathbb Z_N | s(0) = s\}. 
\end{equation*}
More generally, it is of interest to quantify the probability $p_{\pi_0}^N(\mathcal A)$, 
where the initial condition of the process $s(0)$ is a random variable characterised by the density function $\pi_0(\cdot)$.
In Section~\ref{subsec:safety_forward} we present a forward computation of probabilistic invariance by application of the approximation procedure above,  
then review results on backward computation \cite{APKL10,SA11,SAH12,SA12} in Section~\ref{subsec:safety_backward}. 
We conclude in Section~\ref{subsec:compare} with a comparison of the two approaches. 

\subsection{Forward Computation of Probabilistic Invariance}
\label{subsec:safety_forward}
The technique for approximating the density function of a process in time can be easily employed for the approximate computation of probabilistic invariance.
Define
functions $W_t:\mathcal S\rightarrow\mathbb R^{\ge 0}$, 
characterised as 
\begin{equation}
\label{eq:frwd_recursion}
W_{t+1}(\bar s) = \mathds 1_{\mathcal A}(\bar s)\int_{\mathcal S}W_t(s)t_{\mathfrak s}(\bar s|s)ds,\quad
W_0(\bar s) = \mathds 1_{\mathcal A}(\bar s)\pi_0(\bar s), \quad \forall \bar s\in\mathcal S.
\end{equation}
Then the solution of the problem is obtained as $p_{\pi_0}^N(\mathcal A) = \int_{\mathcal S}W_N(s)ds$.
A comparison of the recursions in \eqref{eq:frwd_recursion} and in \eqref{eq:rec_trunc} reveals how probabilistic invariance can be computed as a special case of the general approximation procedure in this work.
In applying the procedure, 
the only difference consists in replacing set $\Upsilon$ by the safe set $\mathcal A$, 
and in restricting Assumption~\ref{ass:Lip_cont_bounded} to hold over the safe set -- the solution over the complement of this set is trivially known, 
as such the error related to the truncation of the state space can be disregarded.  
The procedure consists in partitioning the safe set, 
in constructing the Markov chain $\mathscr M_{\mathfrak p}$ as per \eqref{eq:trans_elem}, 
and in computing $\psi_t(\cdot)$ as an approximation of $W_t(\cdot)$ based on \eqref{eq:approx_marg}. 
The error of this approximation is $\|W_t-\psi_t\|_\infty\le E_t$, which results in the following:
\begin{equation*}
\left|p_{\pi_0}^N(\mathcal A)-\int_{\mathcal A}\psi_t(s)ds \right|\le E_N\mathcal L(\mathcal A) = \kappa(N,M_{\mathfrak f})\lambda_{\mathfrak f}\delta\mathcal L(\mathcal A)\doteq E_{\mathfrak f}.
\end{equation*}
Note that the sub-density functions satisfy the inequalities
\begin{equation*}
1\ge \int_{\mathcal S}W_0(s)ds\ge \int_{\mathcal S}W_1(s)ds\ge \ldots\ge \int_{\mathcal S}W_N(s)ds\ge 0. 
\end{equation*}

\subsection{Backward Computation of Probabilistic Invariance}
\label{subsec:safety_backward}

The contributions in \cite{APKL10,SA11,SAH12,SA12} have characterised specifications in PCTL with an alternative formulation based on backward recursions.  
In particular, the computation of probabilistic invariance can be obtained via the value functions $V_t:\mathcal S\rightarrow[0,1]$, 
which are characterised as 
\begin{equation}
\label{eq:bckwd_recursion}
V_t(s) = \mathds 1_{\mathcal A}(s)\int_{\mathcal S}V_{t+1}(\bar s)t_{\mathfrak s}(\bar s|s)d\bar s,\quad
V_N(s) = \mathds 1_{\mathcal A}(s), \quad\forall s\in\mathcal S.
\end{equation}
The desired probabilistic invariance is expressed as 
$p_{\pi_0}^N(\mathcal A) = \int_{\mathcal S}V_0(s)\pi_0(s)ds.$
The value functions always map the state space to the interval $[0,1]$ and they are non-increasing, 
namely $V_{t}(s)\le V_{t+1}(s)$ for any fixed $s \in \mathcal S$.  
The contributions in \cite{APKL10,SA11,SAH12,SA12} discuss efficient algorithms for the approximate computation of the quantity $p_{\pi_0}^N(\mathcal A)$,   
relying on different assumptions on the model under study. 
The easiest and most straightforward procedure is based on the following assumption \cite{APKL10}. 
\begin{asm}
\label{ass:back_cond_kernel}
The conditional density function of the process is globally Lipschitz continuous with respect to the conditional state within the safe set.   
Namely, there exists a finite constant $\lambda_{\mathfrak b}$, such that
\begin{equation*}
|t(\bar s|s)-t(\bar s|s')|\le \lambda_{\mathfrak b} \|s-s'\|, \quad \forall s,s',\bar s\in\mathcal A.
\end{equation*}
A finite constant $M_{\mathfrak b}$ is introduced as 
$M_{\mathfrak b} = \sup_{s\in\mathcal A}\int_{\mathcal A}t_{\mathfrak s}(\bar s|s)d\bar s\le 1$.
\end{asm}

The procedure introduces a partition of the safe set $\mathcal A = \cup_{i=1}^{n}\mathcal A_i$ and extends it to $\mathcal S = \cup_{i=1}^{n+1}\mathcal A_i$, 
with $\mathcal A_{n+1} = \mathcal S\backslash\mathcal A$.
Then it selects arbitrary representative points $s_i\in \mathcal A_i$
and constructs a finite-state Markov chain $\mathscr M_{\mathfrak b}$ over the finite state space $\{s_1,s_2,\ldots,s_{n+1}\}$, 
endowed with transition probabilities
\begin{equation}
\label{eq:trans_elem_rep}
\begin{array}{l}
P(s_i,s_j) = \int_{\mathcal A_j}t_{\mathfrak s}(\bar s|s_i)d\bar s,\quad P(s_{n+1},s_j) = \delta_{(n+1)j}, 
\end{array}
\end{equation}
for all $i\in\mathbb N_n, j\in\mathbb N_{n+1}$. The error of such an approximation is \cite{SAH12}:
\begin{equation*}
E_{\mathfrak b} \doteq \kappa(N,M_{\mathfrak b})\lambda_{\mathfrak b}\delta \mathcal L(\mathcal A),
\end{equation*} 
where $\delta$ is the max partitions diameter, 
and $\mathcal L(\mathcal A)$ is the Lebesgue measure of set $\mathcal A$. 

\subsection{Comparison of the Two Approaches}
\label{subsec:compare}

We first compare the two constructed Markov chains.
The Markov chain $\mathscr M_{\mathfrak p}$ obtained with the abstraction from the forward approach is a special case of the Markov chain $\mathscr M_{\mathfrak b}$ from the backward approach: in the latter case in fact the representative points can be selected intelligently to determine the average probability of jumping from one partition set to another. 
More specifically, the quantities \eqref{eq:trans_elem} are a special case of those in \eqref{eq:trans_elem_rep} (based on the mean value theorem for integration).
We will show that this leads to a less conservative (smaller) error bound for the approximation. 

The forward computation is in general more informative than the backward computation since it provides not only the solution of the safety problem in time, but also the state distribution over the safe set.
Further the forward approach may provide some insight to the solution of the infinite-horizon safety problem \cite{ta2011,TA14} for a given initial distribution. 
As discussed in \cite{TA14},
solution of the infinite-horizon safety problem depends on the existence of absorbing subsets of the safe set. 
The outcome of the forward approach can provide evidence on the non existence of such subsets.
Finally, the forward approach presented in Sections \ref{sec:trunc}-\ref{sec:proj} for approximating density functions can be used to approximate the value functions in the recursion \eqref{eq:frwd_recursion} over \emph{unbounded} safe sets since we do not require the state space (thus also the safe set) to be bounded, 
while boundedness of the safe set is required in all the results in the literature that are based on backward computations. 

Next, we compare errors and related assumptions. 
The error computations
rely on two different assumptions: 
the Lipschitz continuity of the conditional density function with respect to the current state or to the next state, respectively.  
Further, 
the constants $M_{\mathfrak f}$ and $M_{\mathfrak b}$ are generally different and play an important role in the form of the error. 
$M_{\mathfrak b}$ represents the maximum probability of remaining within a given set, 
while $M_{\mathfrak f}$ is an indication of the maximum concentration of the process evolution towards one state, over a single time-step. 
$M_{\mathfrak b}$ is always less than or equal to one, while $M_{\mathfrak f}$ could be any finite positive number.

\medskip

\noindent\textbf{\textbf{Example \ref{ex:linear_1d} (Continued).}}
The constants $\lambda_{\mathfrak f},M_{\mathfrak f}$ and $\lambda_{\mathfrak b},M_{\mathfrak b}$ for the one dimensional dynamical system of Example \ref{ex:linear_1d} are
\begin{equation*}
\quad \lambda_{\mathfrak f} = \frac{1}{\sigma^2\sqrt{2\pi e}},\quad \lambda_{\mathfrak b} = a\lambda_{\mathfrak f}, \quad M_{\mathfrak f} =\frac{1}{a}, \quad M_{\mathfrak b}\le 1.
\end{equation*}
If $0<a<1$, 
the system trajectories converge to an equilibrium point (in expected value). 
In this case the model solution has higher chances of ending up in a neighbourhood of the equilibrium in time, 
and the backward recursion provides a better error bound. 
If $a>1$, 
the system trajectories tend to diverge with time. 
In this case the forward recursion provides a much better error bound, 
compared to the backward recursion. 

For the numerical simulation we select a safety set $\mathcal A = [0,1]$, 
a noise level $\sigma = 0.1$, 
and a time horizon $N = 10$. 
The solution of the safety problem for the two cases $a = 1.2$ and $a=0.8$ is plotted in Figure~\ref{fig:1d_example}.
We have computed constants $\lambda_{\mathfrak f} = 24.20, M_{\mathfrak b} = 1$ (in both cases), 
while $\lambda_{\mathfrak b} = 29.03, M_{\mathfrak f} = 0.83$ for the first case  
and $\lambda_{\mathfrak b} = 19.36, M_{\mathfrak f} = 1.25$ for the second case. 
We have selected the center of the partition sets (distributed uniformly over the set $\mathcal A$) as representative points for the Markov chain $\mathscr M_{\mathfrak b}$.
In order to compare the two approaches, we have assumed the same computational effort (related to the same partition size of $\delta = 0.7\times 10^{-4}$),
and have obtained an error $E_{\mathfrak f} = 0.008,E_{\mathfrak b} = 0.020$ for $a=1.2$ and $E_{\mathfrak f} = 0.056, E_{\mathfrak b} = 0.014$ for $a=0.8$.
The simulations show that the forward approach works better for $a=1.2$, 
while the backward approach is better suitable for $a=0.8$.
Note that the approximate solutions provided by the two approaches are very close: 
the difference of the transition probabilities computed via the Markov chains $\mathcal M_{\mathfrak f},\mathcal M_{\mathfrak b}$ are in the order of $10^{-8}$, 
and the difference in the approximate solutions (black curve in Figure~\ref{fig:1d_example}) is in the order of $10^{-6}$. 
This has been due to the selection of very fine partition sets that have resulted in small abstraction errors. \hfill\qed
\begin{figure}
\centering
\includegraphics[scale = 0.5]{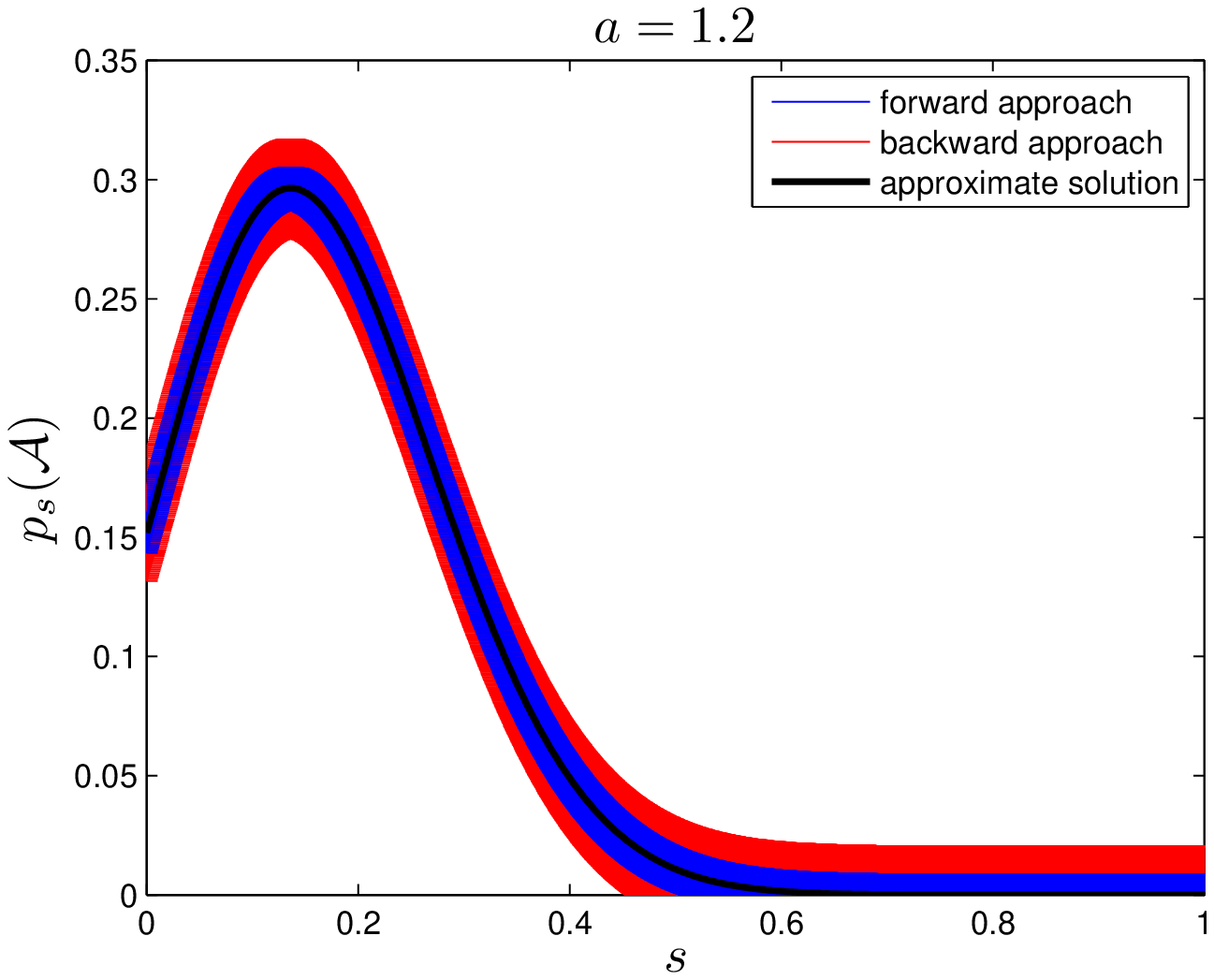}
\includegraphics[scale = 0.5]{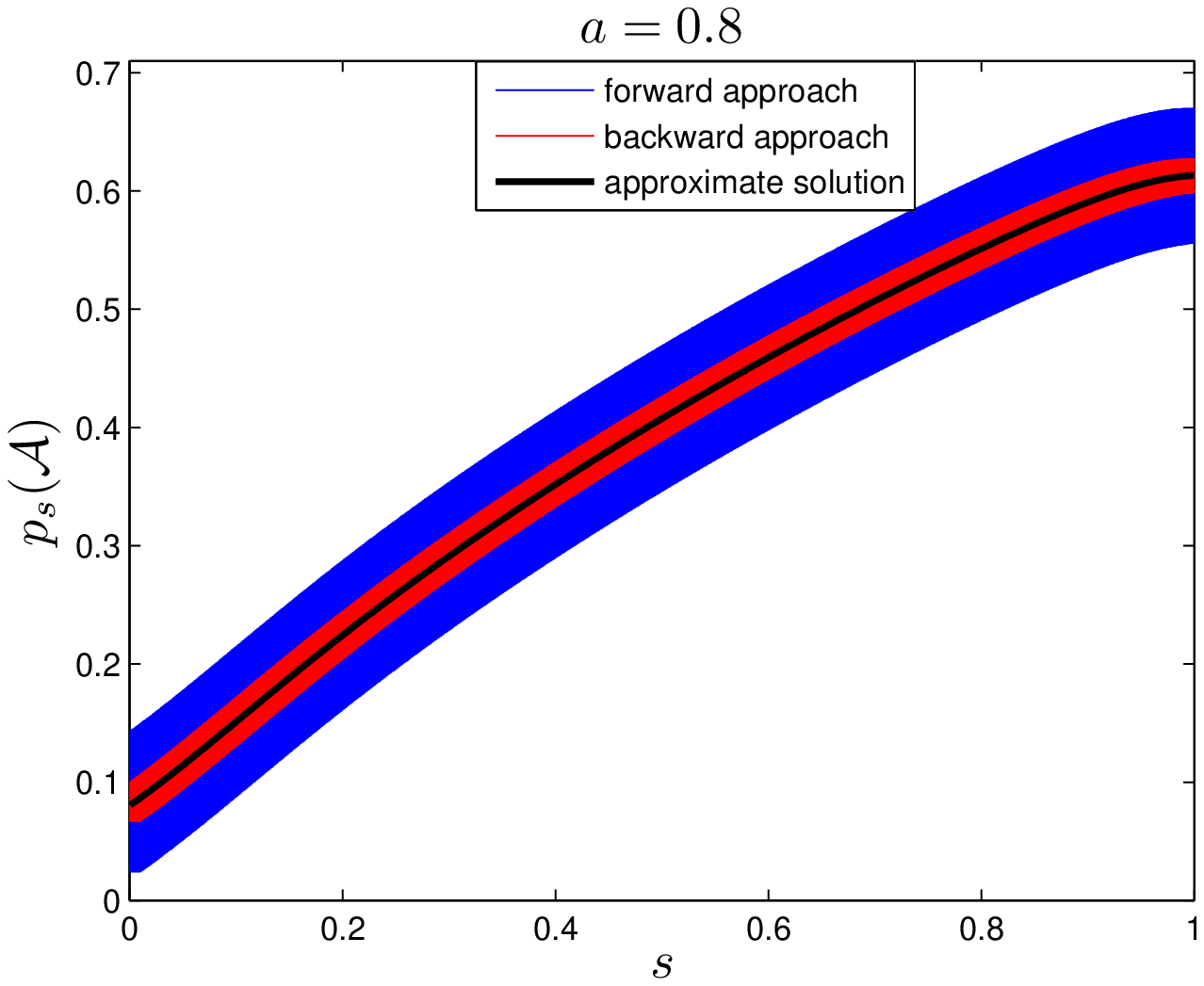}
\caption{Approximate solution of the probabilistic invariance problem (thin black line), 
together with error intervals of forward (\textcolor{blue}{blue} band) and backward (\textcolor{red}{red} band) approaches, for $a=1.2$ (left) and $a=0.8$ (right).}
\label{fig:1d_example}
\end{figure}

\begin{rem}
Over deterministic models,
\cite{IM07HSCC} compares forward and backward reachability analysis and provides insights on their differences:
the claim is that for systems with significant contraction, forward reachability is more effective than backward reachability because of numerical stability issues.
On the other hand, 
for the probabilistic models under study, 
the result indicates that under Lipschitz continuity of the transition kernel the backward approach is more effective in systems with convergence in the state distribution. 
If we treat deterministic systems as special (limiting) instances of stochastic systems, 
our result is not contradicting with \cite{IM07HSCC} since the Lipschitz continuity assumption on the transition kernels of probabilistic models does not hold over deterministic ones. \hfill\qed
\end{rem}
Motivated by the previous example, 
we study how the convergence properties of a Markov process are related to the constant $M_{\mathfrak f}$.
 
\begin{thm}
\label{thm:exp_conv}
Assume that the initial density function $\pi_{0}(s)$ is bounded and that
the constant $M_{\mathfrak f}$ is finite and $M_{\mathfrak f}<1$. 
If the state space is unbounded, the sequence of density functions $\{\pi_t(s)|t\ge 0\}$ uniformly exponentially converges to zero. 
The sequence of probabilities $\mathbb P\{s(t)\in \mathcal A \}$ and the corresponding solution of the safety problem for any compact safe set $\mathcal A$ exponentially converge to zero. 
\end{thm}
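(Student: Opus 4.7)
The core observation is that the forward recursion \eqref{eq:recurs} is a linear operator on $\mathbb B(\mathcal S)$ whose $\|\cdot\|_\infty$-norm is exactly $M_{\mathfrak f}$. Indeed, for any $t$,
\begin{equation*}
|\pi_{t+1}(\bar s)| = \int_{\mathcal S} t_{\mathfrak s}(\bar s|s)\pi_t(s)\,ds \;\le\; \|\pi_t\|_\infty \int_{\mathcal S} t_{\mathfrak s}(\bar s|s)\,ds \;\le\; M_{\mathfrak f}\,\|\pi_t\|_\infty,
\end{equation*}
and taking the supremum over $\bar s$ gives the contraction $\|\pi_{t+1}\|_\infty \le M_{\mathfrak f}\|\pi_t\|_\infty$. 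Iterating this yields $\|\pi_t\|_\infty \le M_{\mathfrak f}^t\|\pi_0\|_\infty$, which, under the hypotheses $\|\pi_0\|_\infty<\infty$ and $M_{\mathfrak f}<1$, immediately establishes the claimed uniform exponential decay of $\pi_t$ to zero.

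For the second claim, for any compact (hence bounded, with $\mathcal L(\mathcal A)<\infty$) safe set $\mathcal A$, I would bound
\begin{equation*}
\mathbb P\{s(t)\in\mathcal A\} = \int_{\mathcal A} \pi_t(s)\,ds \;\le\; \|\pi_t\|_\infty\,\mathcal L(\mathcal A) \;\le\; M_{\mathfrak f}^{\,t}\|\pi_0\|_\infty\,\mathcal L(\mathcal A),
\end{equation*}
which again decays exponentially. For the safety value itself, I would compare $W_t$ to $\pi_t$ by induction on the recursion \eqref{eq:frwd_recursion}: $W_0 = \mathds 1_{\mathcal A}\pi_0 \le \pi_0$, and if $W_t \le \pi_t$ pointwise, then dropping the indicator and using positivity of $t_{\mathfrak s}$ gives $W_{t+1}(\bar s) \le \int t_{\mathfrak s}(\bar s|s)\pi_t(s)\,ds = \pi_{t+1}(\bar s)$. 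Consequently $p_{\pi_0}^N(\mathcal A) = \int_{\mathcal S} W_N(s)\,ds \le \int_{\mathcal A} \pi_N(s)\,ds$, and the bound above delivers exponential decay of the safety probability.

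\textbf{Remark on the unboundedness hypothesis.} The assumption that $\mathcal S$ is unbounded is not cosmetic: if $\mathcal L(\mathcal S)<\infty$, then Fubini applied to $\int_{\mathcal S} t_{\mathfrak s}(\bar s|s)\,d\bar s = 1$ forces $M_{\mathfrak f} \ge 1$, so the hypothesis $M_{\mathfrak f}<1$ is incompatible with a bounded state space. I would include this observation to explain why the first statement is phrased for unbounded $\mathcal S$ only, whereas the compact-$\mathcal A$ statement works regardless.

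\textbf{Main obstacle.} There is no real technical obstacle: everything follows from the single contraction estimate and a monotonicity induction. The only place where one must be careful is to justify that the supremum passes through the integral (boundedness of $\pi_t$ in each step, guaranteed inductively from $\|\pi_0\|_\infty<\infty$ and $M_{\mathfrak f}<\infty$) and that the comparison $W_t\le\pi_t$ is preserved despite the extra indicator $\mathds 1_{\mathcal A}$ in the safety recursion, which only makes $W_{t+1}$ smaller and hence does not disrupt the induction.
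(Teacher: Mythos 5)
Your proof is correct and follows essentially the same route as the paper's: the single sup-norm contraction $\|\pi_{t+1}\|_\infty\le M_{\mathfrak f}\|\pi_t\|_\infty$ followed by integration over the compact set $\mathcal A$ (your induction $W_t\le\pi_t$ is just a slightly more explicit version of the paper's one-line observation that the invariance probability is dominated by $\mathbb P\{s(t)\in\mathcal A\}$). Your Fubini-based remark on why $M_{\mathfrak f}<1$ forces an unbounded state space is a valid alternative to the consistency argument the paper sketches at the end.
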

Theorem~\ref{thm:exp_conv} indicates that under the invoked assumptions the probability ``spreads out'' over the unbounded state space as time progresses.  
Moreover, the theorem ensures the absence of absorbing sets \cite{ta2011,TA14}, 
which are indeed known to characterise the solution of infinite-horizon properties.
Example~\ref{ex:linear_stab} studies the relationship between constant $M_{\mathfrak f}$ and the stability of linear stochastic difference equations.
\begin{exa}
\label{ex:linear_stab}
Consider the stochastic linear difference equations
\begin{equation*}
s(t+1) = A s(t) + w(t),\quad s(\cdot),w(\cdot)\in\mathbb R^d,
\end{equation*}
where $w(\cdot)$ are i.i.d. random vectors with known distributions.
For such systems $M_{\mathfrak f} = 1/|\det A|$, then the condition $M_{\mathfrak f}<1$ implies instability of the system in expected value.
Equivalently, mean-stability of the system implies $M_{\mathfrak f}\ge 1$. 
Note that for this class of systems $M_{\mathfrak f}>1$ does not generally imply stability, since $\det A$ is only the product of the eigenvalues of the system. \hfill\qed
\end{exa}
The Lipschitz constants $\lambda_{\mathfrak f}$ and $\lambda_{\mathfrak b}$ have a different nature, as clarified in Example~\ref{ex:non_linear}. 
\begin{exa}
\label{ex:non_linear}
Consider the dynamical system 
\begin{equation*}
s(t+1) = f(s(t),w(t)),\quad s(\cdot),w(\cdot)\in\mathbb R^d,
\end{equation*}
where $w(\cdot)$ are i.i.d. with known distribution $t_w(\cdot)$. 
Suppose that the vector field $f:\mathbb R^d\times\mathbb R^d\rightarrow\mathbb R^d$ is continuously differentiable and that the matrix $\frac{\partial f}{\partial w}$ is invertible. 
Then the \emph{implicit function theorem} guarantees the existence and uniqueness of a function $g:\mathbb R^d\times\mathbb R^d\rightarrow\mathbb R^d$ such that $w(t) = g(s(t+1),s(t))$. 
The conditional density function of the system in this case is \cite{Papoulis91}:
\begin{equation*}
t_{\mathfrak s}(\bar s|s) = \left|\det\left[\frac{\partial g}{\partial \bar s}(\bar s,s)\right]\right|t_w(g(\bar s,s)).
\end{equation*}
The Lipschitz constants $\lambda_{\mathfrak f},\lambda_{\mathfrak b}$ are specified by the dependence of function $g(\bar s,s)$ from the variables $\bar s,s$, respectively.
As a special case the invertibility of $\frac{\partial f}{\partial w}$ is guaranteed for systems with additive process noise, namely
$f(s,w) = f_a(s)+w$. Then $g(\bar s,s) = \bar s-f_a(s)$,
$\lambda_{\mathfrak f}$ is the Lipschitz constant of $t_w(\cdot)$,
while $\lambda_{\mathfrak b}$ is the multiplication of the Lipschitz constant of $t_w(\cdot)$ and of $f_a(\cdot)$. \hfill\qed
\end{exa}

\section{Conclusions} 
\label{sec:concl}

This contribution has put forward new algorithms, 
based on Markov chain abstractions, 
for the efficient computation of approximate solutions of the state distribution function in time of Markov processes evolving over continuous state spaces. 
A higher-order function approximation method has also been presented, with a formal derivation of an upper bound on the associated error.  
The approach has been applied to the verification of a particular non-nested PCTL formula (expressing probabilistic safety or invariance), 
and compared with an alternative computational approach from the literature.

The authors plan to integrate the presented procedures within the software tool \textsf{FAUST}$^{\mathsf 2}$, 
which is developed for the formal abstraction and verification of uncountable-state stochastic processes \cite{FAUST15}. 
The software enables the user to automatically export the finite-state abstracted model to existing probabilistic model checking tools,  
such as PRISM and MRMC \cite{HKNP06,KKZ05}, for further quantitative analysis, verification, and synthesis objectives.

\bibliographystyle{plain}
\bibliography{biblio}

\vspace{2cm}
\appendix
\section{Proof of Statements}
\begin{proof}[Proof of Theorem \ref{thm:bound_behaviour}]
We prove the theorem inductively.
The statement is trivial for $t=0$ based on Assumption \ref{ass:marg_conv}.
Suppose it is true for $t$, then we prove it for $t+1$.
Take $\bar s\in\mathcal S\backslash\Lambda_{t+1}$, then
\begin{align*}
\pi_{t+1}(\bar s) = \int_{\mathcal S}t_{\mathfrak s}(\bar s|s)\pi_t(s)ds
& = \int_{\mathcal S\backslash\Lambda_t}t_{\mathfrak s}(\bar s|s)\pi_t(s)ds + \int_{\Lambda_t}t_{\mathfrak s}(\bar s|s)\pi_t(s)ds.
\end{align*}
The first integral is upper bounded by $\varepsilon_t M_{\mathfrak f}$.
The domain of the second integral implies that $(s,\bar s)\in\Lambda_t\times\mathcal S$. Combining this with the definition \eqref{eq:recur_supp} of $\Lambda_{t+1}$ and $\bar s\notin\Lambda_{t+1}$ results in $(s,\bar s)\notin \varGamma$. Then 
\begin{align*}
\int_{\Lambda_t}t_{\mathfrak s}(\bar s|s)\pi_t(s)ds\le \epsilon \int_{\Lambda_t}\pi_t(s)ds\le \epsilon.
\end{align*}
Then we obtain $\pi_{t+1}(\bar s)\le \varepsilon_t M_{\mathfrak f}+\epsilon=\varepsilon_{t+1}$, 
for all $\bar s\in\mathcal S\backslash\Lambda_{t+1}$.
\end{proof}

\proof[Proof of Theorem \ref{thm:error_trunc}]
The initial density function $\mu_0$ satisfies the following inequality:
\begin{align*}
\|\pi_0-\mu_0\|_{\infty} = \|\pi_0-\mathds 1_{\Lambda_0}\pi_0\|_{\infty}
= \|\mathds 1_{\mathcal S\backslash\Lambda_0}\pi_0\|_{\infty}
= \sup\left\{\pi_0(s),\, s\in\mathcal S\backslash\Lambda_0\right\}\le \varepsilon_0.
\end{align*}
Suppose $\mu_t$ satisfies the inequality. We prove that it is also true for $\mu_{t+1}$.
Take any $\bar s\in\mathcal S\backslash\Upsilon$,
\begin{equation*}
\mathcal S\backslash\Upsilon\subset \mathcal S\backslash\Lambda_{t+1}\Rightarrow|\pi_{t+1}(\bar s)-\mu_{t+1}(\bar s)| = \pi_{t+1}(\bar s)\le \varepsilon_{t+1}.
\end{equation*}
Take any $\bar s\in\Upsilon$, we have
\begin{equation*}
|\pi_{t+1}(\bar s)-\mu_{t+1}(\bar s)| \le \int_{\mathcal S}t_{\mathfrak s}(\bar s|s)|\pi_t(s)-\mu_t(s)|ds
\le \varepsilon_t\int_{\mathcal S}t_{\mathfrak s}(\bar s|s)ds\le \varepsilon_tM_{\mathfrak f}\le\varepsilon_{t+1}.\eqno{\qEd}
\end{equation*}

\proof[Proof of Lemma \ref{lmm:lip_cont}] 
For any $t\in\mathbb N$ and $\bar s,\bar s'\in\mathcal S$,
\begin{equation*}
|\pi_t(\bar s)-\pi_t(\bar s')|
\le \int_{\mathcal S}\pi_{t-1}(s)\left|t_{\mathfrak s}(\bar s|s)-t_{\mathfrak s}(\bar s'|s)\right|ds
\le  \lambda_{\mathfrak f}\|\bar s-\bar s'\|\int_{\mathcal S}\pi_t(s)ds =  \lambda_{\mathfrak f}\|\bar s-\bar s'\|.\eqno{\qEd}
\end{equation*}

\begin{proof}[Proof of Theorem \ref{thm:Error}]
We use the triangle inequality as 
\begin{equation*}
\|\pi_t-\psi_t\|_\infty \le \|\pi_t-\mu_t\|_\infty + \|\mu_t-\psi_t\|_\infty
\le \varepsilon_t + \|\mu_t-\psi_t\|_\infty.
\end{equation*}
Define the set of Lebesgue integrable functions $\mathcal D$ and its subset $\mathcal D_1$:
\begin{align*}
& \mathcal D = \left\{f:\mathcal S\rightarrow\mathbb R^{\ge 0}\bigg|\int_{\mathcal S}f(s)ds<\infty\right\},\\
& \mathcal D_1 = \left\{f\in\mathcal D\bigg|\int_{\mathcal S}t_{\mathfrak s}(\bar s|s)f(s)ds<\infty,\text{ for all }\bar s\in\mathcal S\right\},
\end{align*}
and the operators
\begin{align*}
& \mathcal R:\mathcal D_1\rightarrow\mathcal D,\quad
\mathcal R(f)(\bar s) =  \mathds 1_{\Upsilon}(\bar s)\int_{\mathcal S}t_{\mathfrak s}(\bar s|s)f(s)ds,\\
& \mathcal R_a:\mathcal D\rightarrow\mathcal D_1,\quad
\mathcal R_a(f)(\bar s) =  \sum_{i=1}^{n}\frac{\int_{\mathcal A_i}f(s)ds}{\mathcal L(\mathcal A_i)}\mathds 1_{\mathcal A_i}(\bar s).
\end{align*}
Then $\mu_t(s)$ is formulated as $t-$times application of the operator $\mathcal R$ to the initial function $\pi_0(s)$,
\begin{align*}
\mu_{t+1} = \mathcal R(\mu_t)\Rightarrow\mu_t = \mathcal R^t(\pi_0) \quad\forall t\in\mathbb N.
\end{align*}
Define functions $\psi_t$ by the recursive equation:
\begin{align*}
\psi_{t+1} =(\mathcal R_a\mathcal R )(\psi_t)\Rightarrow\psi_t = (\mathcal R_a\mathcal R )^t(\psi_0) \quad\forall t\in\mathbb N,
\end{align*}
initialised with $\psi_0 = \mathcal R_a(\pi_0)$.
Notice that the functions $\psi_t,\,t\in\mathbb N,$ are all piecewise constant due to their recursive definition. We restrict our attention to the set $\Upsilon$ since the supports of both functions $\mu_t,\psi_t$ are included in $\Upsilon$. 
The goal is to
prove that $\psi_t$ satisfies \eqref{eq:approx_marg} and $\|\mu_t-\psi_t\|_\infty\le E_t$.
We achieve this goal using induction. The initial function $\psi_0$ is of the form
\begin{equation*}
\psi_{0}(s) = \mathcal R_a(\pi_0)(s)
= \sum_{i=1}^{n}\frac{\int_{\mathcal A_i}\pi_0(v)dv}{\mathcal L(\mathcal A_i)}\mathds 1_{\mathcal A_i}(s)
= \sum_{i=1}^{n}\frac{p_0(i)}{\mathcal L(\mathcal A_i)}\mathds 1_{\mathcal A_i}(s).
\end{equation*}
Suppose the statement is true for $t$. Then for any $s\in\mathcal A_j$, $j\in\mathbb N_n$,
\begin{align*}
\psi_{t+1}(s) & =(\mathcal R_a\mathcal R )(\psi_t)(s)
= \frac{1}{\mathcal L(\mathcal A_j)}\int_{\mathcal A_j}\mathcal R(\psi_t)(v)dv\\
& = \frac{1}{\mathcal L(\mathcal A_j)}\int_{\mathcal A_j}\int_{\mathcal S}t_{\mathfrak s}(v|u)\psi_t(u)du dv
= \frac{1}{\mathcal L(\mathcal A_j)}\sum_{i=1}^{n}\int_{\mathcal A_j}\int_{\mathcal A_i}t_{\mathfrak s}(v|u)\psi_t(u)du dv\\
& = \frac{1}{\mathcal L(\mathcal A_j)}\sum_{i=1}^{n}
\frac{p_t(i)}{\mathcal L(\mathcal A_i)}\int_{\mathcal A_j}\int_{\mathcal A_i}t_{\mathfrak s}(v|u)du dv
= \frac{1}{\mathcal L(\mathcal A_j)}\sum_{i=1}^{n}p_t(i) P_{ij}\\
&= \frac{1}{\mathcal L(\mathcal A_j)}p_{t+1}(j).
\end{align*}
We have proved \eqref{eq:approx_marg}.
The approximation error for $t=0$ is computed using \eqref{eq:cond_init_dens} in Assumption \ref{ass:Lip_cont_bounded}. Take any state $s\in\mathcal A_i$, $i\in\mathbb N_n$:
\begin{equation*}
|\mu_0(s)-\psi_0(s)| = \left|\pi_0(s)- \frac{\int_{\mathcal A_i}\pi_0(v)dv}{\mathcal L(\mathcal A_i)}\right|
\le \frac{\int_{\mathcal A_i}|\pi_0(s)-\pi_0(v)|dv}{\mathcal L(\mathcal A_i)}\le \lambda_0\delta = E_0.
\end{equation*}
For $t\ge 1$ we can write
\begin{equation*}
\|\mu_{t+1}-\psi_{t+1}\|_\infty = \|\mathcal R(\mu_t)-(\mathcal R_a\mathcal R )(\psi_t)\|_\infty
\le \| \mathcal R(\mu_t)-\mathcal R(\psi_t)\|_\infty + \|\mathcal R(\psi_t)-(\mathcal R_a\mathcal R )(\psi_t)\|_\infty.
\end{equation*}
The error has two terms. The first term is upper bounded as follows
\begin{equation*}
|\mathcal R(\mu_t)(\bar s)-\mathcal R(\psi_t)(\bar s)| 
\le\mathds 1_{\Upsilon}(\bar s) \int_{\mathcal S}t_{\mathfrak s}(\bar s|s)|\mu_t(s)-\psi_t(s)|ds\le E_t\int_{\mathcal S}t_{\mathfrak s}(\bar s|s)ds
\le M_{\mathfrak f} E_t.
\end{equation*}
Let us focus on the second term: 
for any arbitrary state $\bar s\in\mathcal A_j$, $j\in\mathbb N_n$, we have 
\begin{align*}
|\mathcal R(\psi_t)(\bar s)&-(\mathcal R_a\mathcal R )(\psi_t)(\bar s)|
= \left|\int_{\mathcal S}t_{\mathfrak s}(\bar s|u)\psi_t(u)du - \frac{1}{\mathcal L(\mathcal A_j)}\int_{\mathcal A_j}\int_{\mathcal S}t_{\mathfrak s}(v|u)\psi_t(u)du dv\right|\\
& \le \int_{\mathcal S}\psi_t(u)\left|t_{\mathfrak s}(\bar s|u)-\frac{1}{\mathcal L(\mathcal A_j)}\int_{\mathcal A_j}t_{\mathfrak s}(v|u)dv\right|du\\
& \le \int_{\mathcal S}\psi_t(u)\frac{1}{\mathcal L(\mathcal A_j)}\int_{\mathcal A_j}\left|t_{\mathfrak s}(\bar s|u)-t_{\mathfrak s}(v|u)\right|dv du
\le \int_{\mathcal S}\psi_t(u) \lambda_{\mathfrak f}\delta du \le \lambda_{\mathfrak f}\delta.
\end{align*}
The summation of the two upper bounds leads to that in \eqref{eq:part_error_rec}.
\end{proof}

\begin{proof}[Proof of Theorem \ref{thm:error_dynamic}]
The definition of $\psi_t^{\mathfrak h}$ implies that $\psi_{t+1}^{\mathfrak h} = \Pi_\Upsilon\mathcal R_\Upsilon(\psi_t^{\mathfrak h})$, 
with $\psi_0^{\mathfrak h} = \mu_0$. 
Then \eqref{eq:Main_inequality} is true for $t=0$ with $E_0^{\mathfrak h} = 0$. 
Assume that $\|\mu_t-\psi_t^{\mathfrak h}\|_\infty\le E_t^{\mathfrak h}$; 
then for any $\bar s\in\Upsilon$, 
\begin{align*}
|\mu_{t+1}(\bar s)-\psi_{t+1}^{\mathfrak h}(\bar s)| & = |\mathcal R_\Upsilon(\mu_t)(\bar s)
-\Pi_\Upsilon\mathcal R_\Upsilon(\psi_t^{\mathfrak h})(\bar s)|\\
& \le |\mathcal R_\Upsilon(\mu_t)(\bar s)-\Pi_\Upsilon\mathcal R_\Upsilon(\mu_t)(\bar s)|
+ |\Pi_\Upsilon\mathcal R_\Upsilon(\mu_t)(\bar s) - \Pi_\Upsilon\mathcal R_\Upsilon(\psi_t^{\mathfrak h})(\bar s)|.
\end{align*}
The first term can be upper bounded based on the linearity of the operator $\Pi_\Upsilon$ as 
\begin{align*}
|\mathcal R_\Upsilon(\mu_t)(\bar s) & - \Pi_\Upsilon\mathcal R_\Upsilon(\mu_t)(\bar s)|
= \left|\int_\Upsilon t_{\mathfrak s}(\bar s|s)\mu_t(s)ds-\int_\Upsilon\Pi_\Upsilon(t_{\mathfrak s}(\bar s|s))\mu_t(s)ds\right|\\
& \le \int_\Upsilon \mu_t(s)|t_{\mathfrak s}(\bar s|s)-\Pi_\Upsilon(t_{\mathfrak s}(\bar s|s))|ds
\le \mathcal E^{\mathfrak h}\int_\Upsilon \mu_t(s)ds
\le \mathcal E^{\mathfrak h}.
\end{align*}
On the other hand, the second term is upper bounded as follows: 
\begin{align*}
|\Pi_\Upsilon\mathcal R_\Upsilon(\mu_t)(\bar s)-\Pi_\Upsilon\mathcal R_\Upsilon(\psi_t^{\mathfrak h})(\bar s)|
& \le \int_\Upsilon |\Pi_\Upsilon(t_{\mathfrak s}(\bar s|s))||\mu_t(s)-\psi_t^{\mathfrak h}(s)|ds\\
& \le E_t^{\mathfrak h}\int_\Upsilon |\Pi_\Upsilon(t_{\mathfrak s}(\bar s|s))|ds
\le E_t^{\mathfrak h}M_{\mathfrak f}^{\mathfrak h}. 
\end{align*}
The addition of the two bounds leads to the statement. 
\end{proof}

\begin{proof}[Proof of Theorem \ref{thm:exp_conv}]
Based on the recursion over the density functions, we have that 
\begin{align*}
& \pi_{t+1}(\bar s)\le\sup\{\pi_{t}(s),\,s\in\mathcal S\}\int_{\mathcal S}t_{\mathfrak s}(\bar s|s)ds\le M_{\mathfrak f} \sup\{\pi_{t}(s),\,s\in\mathcal S\}\\
&\Rightarrow \pi_{t}(s)\le \sup\{\pi_0(s),\,s\in\mathcal S\}M_{\mathfrak f}^t, \quad \forall t\in\mathbb N.
\end{align*}
Then $\{\pi_t\}$ converges to zero uniformly exponentially, with a rate $M_{\mathfrak f}$. 
With focus on the safety problem we have $\mathbb P\{s(u)\in \mathcal A \text{ for all } u\in\mathbb N_t\}\le \mathbb P\{s(t)\in \mathcal A \}$, where
\begin{align*}
& \mathbb P\{s(t)\in \mathcal A \} = \int_{\mathcal A} \pi_{t}(s) ds \le \mathcal L(\mathcal A)\sup_{s\in\mathcal S}\pi_{0}(s)M_{\mathfrak f}^t.
\end{align*}
Note that if 
$
\lim\limits_{t\rightarrow\infty}\mathbb P\{s(t)\in \mathcal S\} = \lim\limits_{t\rightarrow\infty} 1 = 1,
$
then the state-space cannot be bounded under the assumptions. 
\end{proof}

\newpage
\section{List of Symbols}
\begin{longtable}{lp{10.5cm}}
$\mathbb N  = \{1,2,3,\ldots\}$& the set of natural numbers\\
$\mathbb N_m = \{1,2,\ldots,m\}$& the finite set of natural numbers\\
$\mathbb Z_m = \{0,1,2,\ldots,m\}$& the finite set of non-negative numbers\\
$\mathscr M_{\mathfrak s}$ &  discrete-time Markov process\\
$\mathcal S$ & state space of the Markov process\\
$\mathcal B(\mathcal S)$ & Borel $\sigma$-algebra on the space $\mathcal S$\\
$T_{\mathfrak s}$ & stochastic kernel\\
$\mathcal P$ & probability measure on $\mathcal S$ for one-step transition of the process\\
$t_{\mathfrak s}(\bar s|s)$& conditional density function of the process\\
$\pi_0:\mathcal S\rightarrow \mathbb R^{\ge 0}$& density function of the initial state\\
$\pi_t:\mathcal S\rightarrow \mathbb R^{\ge 0}$& density function of $s(t)$\\
$\mathbb P$& probability measure on the product space $\mathcal S^{t+1}$\\
$\mathbf s(t) = \left[s(0),s(1),\ldots,s(t)\right]$& bold typeset is employed for vectors\\
$\mathscr M_{\mathfrak p}$& Markov chain for forward computation of the density function\\
$a,b,\sigma,\alpha$& parameters of the example\\
$w(\cdot)$& process noise\\
$[\beta_0,\gamma_0]\subset\mathbb R$& support of the initial density function $\pi_0$ in the example\\
$\phi_{\sigma}(u)$& Gaussian density function with zero mean and standard deviation $\sigma$\\
$\varGamma\subset\mathcal S^2,\,\Lambda_0\subset\mathcal S$& sets used in the truncation procedure\\
$\epsilon$ & threshold on truncation part of $t_{\mathfrak s}$: $t_{\mathfrak s}(\bar s|s)\le \epsilon$ for all $(s,\bar s)\in\mathcal S^2\backslash\varGamma$\\
$\varepsilon_0$ & threshold on truncation part of $\pi_0$: $\pi_0(s)\le\varepsilon_0$ for all $s\in\mathcal S\backslash\Lambda_0$\\
$\varepsilon_t$ & threshold on truncation part of $\pi_t$: 
$\pi_t(s)\le \varepsilon_t$ for all $s\in\mathcal S\backslash\Lambda_t$\\
$\lambda_0$& Lipschitz constants of $\pi_0(s)$\\
$\lambda_{\mathfrak f}$& Lipschitz constants of $t_{\mathfrak s}(\bar s|s)$ with respect to $\bar s$\\ 
$M_{\mathfrak f}$ & upper bound for the quantities $\int_{\mathcal S}t_{\mathfrak s}(\bar s|s)ds,\, \bar s\in\mathcal S$\\
$\|\cdot\|_{\infty}$ & infinity norm\\
$\mathds 1_A(\cdot)$ & indicator function of set $A$\\
$\Lambda_t$ & support set of $\pi_t(\cdot)$ obtained via a recursive procedure\\
$\Upsilon = \cup_{t=0}^N\Lambda_t$ & truncated state space\\
$\varXi(s)$ & set-valued map $\varXi:\mathcal S\rightarrow 2^{\mathcal S}$\\
$\kappa(t,M_{\mathfrak f})$ & constant equal to $t$ for $M_{\mathfrak f} = 1$ and to $(1-M_{\mathfrak f}^t)/(1-M_{\mathfrak f})$ for $M_{\mathfrak f}\ne 1$\\
$\mu_t(\cdot)$ & approximate density function after state space truncation\\
$\Lambda_t = [\beta_t,\gamma_t]$ & support set of $\pi_t(\cdot)$ in the example\\
$\Upsilon = \cup_{i=1}^n \mathcal A_i$ & selected partition\\
$n$ & partition size\\
$\delta$ & partition diameter\\
$\mathcal A_{n+1} = \mathcal S\backslash\Upsilon$ & partition set associated with absorbing state of Markov chain\\
$P = [P_{ij}]$ & transition probability matrix of the Markov chain\\
$\delta_{(n+1)j}$ & Kronecker delta function, equal to one for $j = (n+1)$ and zero otherwise\\
$\mathcal L(\cdot)$ & Lebesgue measure of a set\\
$\mathbf{p_0} = [p_0(1),\ldots,p_0(n+1)]$ & pmf of a Markov chain at $t=0$\\
$\mathbf{p_t} = [p_t(1),\ldots,p_t(n+1)]$ & pmf of a Markov chain at time $t$\\
$\psi_t$ & approximation of the density function $\pi_t$ after abstraction\\
$E_t$ & abstraction error related to partitioning\\
$\mathbb B(\mathcal S)$ &  space of bounded and measurable functions on $\mathcal S$\\
$\mathbb B(\Upsilon)$ & space of bounded and measurable functions on $\Upsilon$\\
$\mathcal R_\Upsilon$ & linear operator on $\mathbb B(\Upsilon)$ defined as  $\mathcal R_\Upsilon (f)(\bar s) = \int_{\Upsilon}t_{\mathfrak s}(\bar s|s)f(s)ds,$ for all $\bar s\in\Upsilon$\\
$\Phi = \{\phi_1(s),\ldots,\phi_h(s)\}$ & set of basis functions $\phi_j\in\mathbb B(\Upsilon)$\\
$\Psi = span\,\Phi$ & function space generated by $\Phi$\\
$\Pi_{\Upsilon}:\mathbb B(\Upsilon)\rightarrow \Psi$ & projection operator\\
$\mathcal E^{\mathfrak h}$&  upper bound for $\left\| \Pi_{\Upsilon}(t_{\mathfrak s}(\cdot|s)) - t_{\mathfrak s}(\cdot|s)\right\|_\infty$\\
$M_{\mathfrak f}^{\mathfrak h}$&  upper bound for $\int_{\Upsilon}\left|\Pi_{\Upsilon}(t_{\mathfrak s}(\bar s|s))\right| ds$, for all $\bar s\in\Upsilon$\\
$\psi_t^{\mathfrak h}$ & approximation of $\pi_t(\cdot)$ computed via higher-order methods\\
$E_t^{\mathfrak h}$ & error of higher-order approximation\\
$d$ & dimension of the state space\\
$\Pi_{\mathcal D}$ & interpolation operator 
projecting any function $f:\mathcal D\rightarrow \mathbb R$
to a unique function of $\Psi$ such that $\Pi_{\mathcal D}(f) = \sum_{j=1}^{h}\alpha_j\phi_j$\\
$\mathbf f = [f(s_i)]_{i \in\mathbb N_h}$ & $h$-dim. column vector containing values of $f$ at the interpolation points\\
$\boldsymbol{\alpha} = [\alpha_j]_{j \in\mathbb N_h}$ & $h$-dim. column vector containing interpolation coefficients\\
$\mathcal Q = [\phi_j(s_i)]_{i,j}$ & $h$-dimensional interpolation matrix corresponding to $\Pi_{\mathcal D}$\\
$\{\phi_{ij},\,j\in\mathbb N_h\}$ & set of basis functions for partition set $\mathcal A_i$\\
$\{s_{ij},\,j\in\mathbb N_h\}$ & set of interpolation points in the partition set $\mathcal A_i$\\
$\{\alpha_{ij},\,j\in\mathbb N_h\}$ & set of interpolation coefficients in the partition set $\mathcal A_i$\\
$\mathcal Q_i = [\phi_{ij}(s_{iv})]_{v,j\in\mathbb N_h}$& matrix representation of interpolation inside partition set $\mathcal A_i$\\
$\mathbb I_h$& $h$-dimensional identity matrix\\
$f|_{\mathcal A_i}$& function $f:\Upsilon\rightarrow \mathbb{R}$ with domain restricted to set $\mathcal A_i\subset\Upsilon$\\
$\alpha_{ij}^{t}$ & interpolation coefficients for $\psi_t^{\mathfrak h}$ used in Algorithm~\ref{algo:app_val}\\
$\beta^{t}_{uv}$ & defined as $\beta^{t}_{uv}\doteq\mathcal R_\Upsilon(\psi^{\mathfrak h}_{t-1})(s_{uv})$ and used in Algorithm~\ref{algo:app_val}\\
$P_{ij}^{uv}$ & defined as $P_{ij}^{uv}\doteq \int_{\mathcal A_i}t_{\mathfrak s}(s_{uv}|s)\phi_{ij}(s)ds$ and used in Algorithm~\ref{algo:app_val}\\
$P = [P(i,j)]_{i,j}$ & matrix with entries $P(i,j) = \int_{\mathcal A_i}t_{\mathfrak s}(s_j|s)ds$ used for piecewise constant approximation of the density functions in Algorithm~\ref{algo:app_val_abs}\\
$\boldsymbol{\alpha_t} = [\alpha_t(i)]_i$& a row vector used in Algorithm~\ref{algo:app_val_abs} for values of  $\psi_t^{\mathfrak h}$\\
$\delta_i$& diameter of the partition set $\mathcal A_i$, $\delta_i = \sup\{\|s-s'\|,\,s,s'\in\mathcal A_i\}$\\
$[a_i,b_i]$& partition set $\mathcal A_i$ for one-dimensional systems\\
$\mathcal M_h$& an upper bound for the quantity $ \left|\partial^h t_{\mathfrak s}(\bar s|s)/\partial \bar s^h\right|$, for all $s,\bar s\in\Upsilon$\\
$[a_{i1},b_{i1}]\times[a_{i2},b_{i2}]$& partition set $\mathcal A_i$ for two-dimensional systems \\
$\mathcal M_2^k,\mathcal M_3^k$& upper bounds on partial derivatives of $t_{\mathfrak s}$ in 2-dim. systems\\
$\mathcal M_2^i,\mathcal M_3^{ij},\mathcal M_3$&  upper bounds on partial derivatives of $t_{\mathfrak s}$ in 3-dim. systems\\
$p_s^N(\mathcal A)$ & safety probability over the set $\mathcal A$ with time horizon $N$ and initial state $s$\\
$p_{\pi_0}^N(\mathcal A)$ & safety probability over the set $\mathcal A$ with the initial state admitting the density function $\pi_0$\\
$W_t:\mathcal S\rightarrow\mathbb R^{\ge 0}$ & sub-density functions for forward computation of the safety probability\\
$V_t:\mathcal S\rightarrow[0,1]$ & value functions for backward computation of the safety probability\\
$E_{\mathfrak f} = \kappa(N,M_{\mathfrak f})\lambda_{\mathfrak f}\delta\mathcal L(\mathcal A)$ & error of forward computation of the safety probability\\
$E_{\mathfrak b} = \kappa(N,M_{\mathfrak b})\lambda_{\mathfrak b}\delta \mathcal L(\mathcal A)$ & error of backward computation of the safety probability\\
$\lambda_{\mathfrak b}$ & Lipschitz constant of $t(\bar s|s)$ with respect to $s$\\
$M_{\mathfrak b}$ & upper bound for $\int_{\mathcal A}t_{\mathfrak s}(\bar s|s)d\bar s$, for all $s\in\mathcal A$\\
$\mathscr M_{\mathfrak b}$ & finite-state Markov chain obtained via the backward abstraction approach\\
$P(s_i,s_j)$ & transition probabilities of the Markov chain $\mathscr M_{\mathfrak b}$
\end{longtable}

\end{document}